\theoremstyle{definition}
\DeclareMathOperator*{\res}{Res}
\newtheorem{definition}{Definition}[section]
\newtheorem{remark}{Remark}[section]
\newtheorem{theorem}{Theorem}[section]
\newtheorem{corollary}{Corollary}[theorem]
\newtheorem{lemma}[theorem]{Lemma}
\newtheorem{proposition}[theorem]{Proposition}
\begin{document}
 
\title{Open Hurwitz Flat F manifolds}
\author{Guilherme F. Almeida}
\affil{}
\date{}

\maketitle

\begin{abstract}

In this paper, we derive  Open WDVV equations starting from any Hurwitz Dubrovin Frobenius manifold. The WDVV equations plays a crucial role in the structure of Frobenius manifolds, quantum cohomology, and integrable systems. Extending these ideas, Open WDVV equations provide a framework to incorporate boundary conditions, making them fundamental in Open Gromov-Witten theory. Using Dubrovin’s construction of Landau-Ginzburg superpotentials associated with Hurwitz spaces, we demonstrate that their primitives satisfy  Open WDVV equations. Our approach provides an efficient method for computing Open WDVV equations associated with any Hurwitz Dubrovin Frobenius manifold.

\end{abstract}

\tableofcontents

\section{Introduction}
The main goal of this paper is to derive an Open WDVV equation starting from any Hurwitz Dubrovin Frobenius manifold. The study of WDVV equations plays a fundamental role in the mathematical formulation of topological field theories. These equations arise naturally in the study of Frobenius manifolds, quantum cohomology, and integrable systems. In particular, the classical WDVV equation governs the structure of the quantum cohomology ring, encoding the associativity of the quantum product and providing deep connections with singularity theory and the geometry of moduli spaces of curves. However, many physical and geometric problems require a generalization of the classical WDVV equations to open settings, where boundary conditions play an essential role. The Open WDVV equation extends the classical equation by introducing an additional structure that accounts for open string invariants, making it an important tool in Open Gromov-Witten theory. \\

The classical WDVV equations governs the structure of a potential function associated with a Frobenius manifold. The following definition introduces the fundamental conditions that a function must satisfy to be a solution to this equation.

\begin{definition}\cite{B. Dubrovin2}
An analytic function $F(t)$, where $t = (t^1, t^2, \ldots, t^n) \in U \subset \mathbb{C}^n$ defined in an open subset of $\mathbb{C}^n$, is considered a solution of the WDVV (Witten-Dijkgraaf-Verlinde-Verlinde) equations if its third derivatives
\begin{equation*}
c_{\alpha\beta\gamma} = \frac{\partial^3 F}{\partial t^{\alpha} \partial t^{\beta} \partial t^{\gamma}}
\end{equation*}
satisfy the following conditions:

\begin{enumerate}

\item The coefficients $\eta_{\alpha\beta} = c_{1\alpha\beta}$ form elements of a constant nondegenerate matrix.
\item The quantities $c_{\alpha\beta}^{\gamma} = \eta^{\gamma\delta} c_{\alpha\beta\delta}$ represent the structure constants of an associative algebra. i.e.,
\begin{equation}\label{closed WDVV equation}
\frac{\partial^3 F}{\partial t^{\alpha}\partial t^{\beta}\partial t^{\mu}}\eta^{\mu\lambda}\frac{\partial^2 F^o}{\partial t^{\lambda}\partial t^{\gamma}}=\frac{\partial^3 F}{\partial t^{\gamma}\partial t^{\beta}\partial t^{\mu}}\eta^{\mu\lambda}\frac{\partial^2 F^o}{\partial t^{\lambda}\partial t^{\alpha}}
\end{equation}

\item The function $F(t)$ must be quasi-homogeneous.
\end{enumerate}

\end{definition}

A natural extension of the closed WDVV equations is given by the Oriented WDVV equations, which involves a vector-valued function rather than a single potential. This formulation introduces additional flexibility and leads naturally to the Open WDVV equations.

\begin{definition}\cite{Alexander}
The analytic vector valued functions $(F_1(t), F_2(t), ..., F_n(t))$ defined on an open subset $U \subset \mathbb{C}^n$, where $t = (t_1, t_2, ..., t_n)$, is a solution of an Oriented WDVV equations if the second derivatives 
\begin{equation*}
c_{\alpha\beta}^{\gamma} = \frac{\partial^2F^{\gamma}}{\partial t^{\alpha} \partial t^{\beta}}
\end{equation*}
 satisfy the following conditions:

\begin{enumerate}
    \item The $c_{\alpha \beta}^{\gamma}$ are structure constants of an associative algebra, expressed as:

 \begin{equation}\label{Oriented WDVV equation}
    \frac{\partial^2F^\alpha}{\partial t^\beta \partial t^\mu} \frac{\partial^2F^\mu}{\partial t^\gamma \partial t^\delta} =
    \frac{\partial^2F^\gamma}{\partial t^\mu \partial t^\delta} \frac{\partial^2F^\beta}{\partial t^\gamma \partial t^\mu}, \quad 1 \leq \alpha, \beta, \gamma, \delta \leq n. 
\end{equation}

    \item The vector $(F^1(t), F^2(t), ..., F^n(t))$ must be a quasi-homogeneous function.
\end{enumerate}

Additionally, a closely related concept to Oriented WDVV equations is the Open WDVV equations, providing an intriguing extension of the closed WDVV equations (\ref{closed WDVV equation}).
\end{definition}

\begin{definition}\cite{Alexander}
For a fixed solution $F(t^1, t^2, ..., t^n)$ of the closed WDVV equations (\ref{closed WDVV equation}), the Open WDVV equations form the following system of partial differential equations for an analytic function $F^o(t_1, t_2, ..., t_n, t_{n+1})$ defined in some open subset of $\mathbb{C}^{n+1}$ if 

\[
c_{\alpha \beta}^{\gamma}=
\begin{cases}
    \eta^{\gamma\delta}\frac{\partial^3 F}{\partial t^{\alpha}\partial t^{\beta}\partial t^{\gamma}} & \text{if } 0 \leq\gamma\leq n, \\
   \frac{\partial^2 F^o}{\partial t^{\alpha}\partial t^{\beta}}  & \text{if } \gamma=n+1.
\end{cases}
\] satisfying the following conditions:

\begin{enumerate}
    \item The $c_{\alpha \beta}^{\gamma}$ are structure constants of an associative algebra, expressed as:
\begin{equation}\label{Open WDVV equation 0}
\begin{split}
\frac{\partial^3 F}{\partial t^{\alpha}\partial t^{\beta}\partial t^{\mu}}\eta^{\mu\lambda}\frac{\partial^2 F^o}{\partial t^{\lambda}\partial t^{\gamma}}+\frac{\partial^2 F^o}{\partial t^{\alpha}\partial t^{\beta}}\frac{\partial^2 F^o}{\partial t^{n+1}\partial t^{\gamma}}=\frac{\partial^3 F}{\partial t^{\gamma}\partial t^{\beta}\partial t^{\mu}}\eta^{\mu\lambda}\frac{\partial^2 F^o}{\partial t^{\lambda}\partial t^{\alpha}}+\frac{\partial^2 F^o}{\partial t^{\gamma}\partial t^{\beta}}\frac{\partial^2 F^o}{\partial t^{n+1}\partial t^{\alpha}}
\end{split}
\end{equation}
where, $1 \leq \alpha, \beta, \gamma, \delta \leq n$.
    
    \item The vector $(F^1(t), F^2(t), ..., F^n(t))$ must be a quasi-homogeneous function.
\end{enumerate}
\end{definition}

In \cite{Alexander}, Paolo Rossi made the following remark

\begin{lemma} \cite{Alexander}
Consider a pair $(F, F^o)$ as a solution to  Open WDVV equations (\ref{Open WDVV equation}). Then, the vector-valued functions 
\begin{equation*}
\left(\eta^{1\mu} \frac{\partial F}{\partial t_\mu}, \eta^{2\mu} \frac{\partial F}{\partial t_\mu}, \ldots,\eta^{n\mu} \frac{\partial F}{\partial t_\mu}, F^o\right)
\end{equation*}
form a solution to  Oriented WDVV equations (\ref{Oriented WDVV equation}).

\end{lemma}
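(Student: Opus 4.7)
The plan is to exhibit the extended vector-valued function $\widetilde{F} = (\widetilde{F}^1, \ldots, \widetilde{F}^{n+1})$ with $\widetilde{F}^\alpha := \eta^{\alpha\mu}\partial F/\partial t^\mu$ for $1 \le \alpha \le n$ and $\widetilde{F}^{n+1} := F^o$, and verify directly that its structure constants $\widetilde{c}_{\alpha\beta}^\gamma := \partial^2 \widetilde{F}^\gamma/(\partial t^\alpha \partial t^\beta)$ define an associative algebra with indices running over $\{1, \ldots, n+1\}$. First I would tabulate $\widetilde{c}_{\alpha\beta}^\gamma$ in three blocks: when $\gamma \le n$ and $\alpha, \beta \le n$, one recovers the Frobenius structure constants $\eta^{\gamma\mu}\partial^3 F/(\partial t^\alpha \partial t^\beta \partial t^\mu)$; when $\gamma \le n$ but at least one of $\alpha, \beta$ equals $n+1$, the constants vanish because $F$ is independent of $t^{n+1}$; and when $\gamma = n+1$ one obtains $\partial^2 F^o/(\partial t^\alpha \partial t^\beta)$.

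Next I would write the desired Oriented WDVV relation as the associativity identity $\widetilde{c}_{\alpha\beta}^\mu \widetilde{c}_{\mu\gamma}^\delta = \widetilde{c}_{\beta\gamma}^\mu \widetilde{c}_{\alpha\mu}^\delta$ and split the sum over $\mu$ into the range $\mu \le n$ and the single term $\mu = n+1$. For $\alpha, \beta, \gamma, \delta \le n$ the $\mu = n+1$ contributions drop out by the vanishing noted above, and what remains is precisely the closed WDVV identity (\ref{closed WDVV equation}) for $F$, known by assumption. For $\alpha, \beta, \gamma \le n$ and $\delta = n+1$ the $\mu \le n$ contributions produce the products of a third derivative of $F$ (contracted with $\eta$) against a second derivative of $F^o$, while the $\mu = n+1$ contribution produces the quadratic-in-$F^o$ term; gathering these, the associativity relation is exactly the Open WDVV equation (\ref{Open WDVV equation 0}) which holds by hypothesis.

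The remaining cases, in which one or more of $\alpha, \beta, \gamma$ equals $n+1$, reduce either to tautologies or to the cases already treated: using symmetry of $\widetilde{c}_{\alpha\beta}^\gamma$ in the lower indices together with the vanishing of $\widetilde{c}_{n+1,\beta}^\gamma$ for $\gamma \le n$, one sees that both sides of the associativity relation either coincide trivially or reduce to permutations of the identities above. Finally, quasi-homogeneity of $\widetilde{F}$ follows immediately from quasi-homogeneity of $F$ (since $\eta$ is constant, each $\eta^{\alpha\mu}\partial_{t^\mu} F$ inherits the scaling) together with the quasi-homogeneity of $F^o$ built into the Open WDVV data.

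The main obstacle is purely combinatorial bookkeeping: one must match the two halves of the expanded associativity identity term-by-term against the two halves of (\ref{Open WDVV equation 0}), taking care of the symmetry between $\alpha$ and $\gamma$ in Rossi's formulation and of the correct placement of the $\eta$-raised indices so that the structure constants of $\widetilde{F}$ align with Dubrovin's convention. No deeper analytic input is needed; the lemma is essentially a repackaging of the closed and open WDVV equations as a single associativity statement for an $(n+1)$-dimensional extended algebra.
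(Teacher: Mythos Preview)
Your proposal is correct and is the standard direct verification: tabulate the extended structure constants, split the associativity identity by index range, and identify the resulting pieces with the closed and open WDVV equations respectively. However, there is nothing to compare it against in this paper: the lemma is stated here as a remark attributed to Paolo Rossi in \cite{Alexander} and is given no proof whatsoever in the present text. So your write-up would in fact supply an argument that the paper itself omits.
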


In \cite{Alcolado Adam},  Alcolado derived an equivalent equation for an Open WDVV equation.

\begin{theorem}\cite{Alcolado Adam}\label{Adam identity 10}
For a fixed solution $F(t^1, t^2, ..., t^n)$ of the closed WDVV equations (\ref{closed WDVV equation}). Let $\lambda= \lambda(p,t^1, . . . , t^n)$ be a function satisfying $\frac{\partial \lambda}{\partial p}\neq  0$ and
 \begin{equation}\label{Adam identity 1 equation1}
\partial_{\alpha}\partial_{\beta}\left(\int \lambda(p)dp\right)=\frac{ \partial_{\alpha}\lambda \partial_{\beta}\lambda-c_{\alpha\beta}^{\gamma}\partial_{\gamma}\lambda}{ \partial_{p}\lambda},
\end{equation}
where $c_{\alpha\beta}^{\gamma}=\frac{\partial^3 F}{\partial t^{\alpha}\partial t^{\beta}\partial t^{\gamma}}$.
Then the pair $(F, \int \lambda(p) dp)$ do satisfy  Open WDVV equations (\ref{Open WDVV equation 0}) with respect the closed WDVV solution $F(t^1, t^2, ..., t^n)$.

\end{theorem}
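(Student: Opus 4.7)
The plan is to verify the Open WDVV equation (\ref{Open WDVV equation 0}) by straightforward substitution of the hypothesis (\ref{Adam identity 1 equation1}). Let me set $p = t^{n+1}$ and $F^o = \int \lambda(p,t^1,\dots,t^n)\, dp$, so that immediately $\partial_{n+1} F^o = \lambda$ and therefore
\begin{equation*}
F^o_{n+1,\gamma} \; = \; \partial_{n+1}\partial_\gamma F^o \; = \; \partial_\gamma \lambda, \qquad 1 \leq \gamma \leq n.
\end{equation*}
The remaining second derivatives $F^o_{\alpha\beta}$ with $1 \leq \alpha,\beta \leq n$ are given explicitly by the hypothesis as
\begin{equation*}
F^o_{\alpha\beta} \; = \; \frac{\partial_\alpha \lambda\, \partial_\beta \lambda - c_{\alpha\beta}^{\gamma}\, \partial_\gamma \lambda}{\partial_p \lambda},
\end{equation*}
where throughout I read $c_{\alpha\beta}^{\gamma} = \eta^{\gamma\delta}\partial_\alpha\partial_\beta\partial_\delta F$ (taking the statement's upper index to be raised by $\eta$, as required for consistency with the index placement in (\ref{Open WDVV equation 0})).

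Next I would substitute these formulas into both sides of (\ref{Open WDVV equation 0}) and clear the common denominator $\partial_p \lambda$. On the left-hand side one obtains
\begin{equation*}
\partial_p \lambda \cdot \mathrm{LHS} \;=\; c_{\alpha\beta}^{\lambda}\bigl(\partial_\lambda \lambda\, \partial_\gamma \lambda - c_{\lambda\gamma}^{\mu}\partial_\mu \lambda\bigr) + \bigl(\partial_\alpha \lambda\, \partial_\beta \lambda - c_{\alpha\beta}^{\mu}\partial_\mu \lambda\bigr)\partial_\gamma \lambda,
\end{equation*}
and analogously for the right-hand side with $\alpha$ and $\gamma$ interchanged. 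Three groups of terms appear: (i) the fully symmetric term $\partial_\alpha\lambda\,\partial_\beta\lambda\,\partial_\gamma\lambda$, which matches on both sides; (ii) the cross terms $c_{\alpha\beta}^{\lambda}\partial_\lambda\lambda\,\partial_\gamma\lambda$ and $c_{\alpha\beta}^{\mu}\partial_\mu\lambda\,\partial_\gamma\lambda$, which are negatives of each other and cancel internally on each side; and (iii) a residual ``double-$c$'' contribution.

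After these cancellations, the identity reduces to the single requirement
\begin{equation*}
c_{\alpha\beta}^{\lambda} c_{\lambda\gamma}^{\mu}\, \partial_\mu \lambda \;=\; c_{\gamma\beta}^{\lambda} c_{\lambda\alpha}^{\mu}\, \partial_\mu \lambda,
\end{equation*}
which holds term-by-term in $\mu$ because associativity of the closed Frobenius product, i.e.\ the WDVV equation (\ref{closed WDVV equation}) for $F$, gives $c_{\alpha\beta}^{\lambda} c_{\lambda\gamma}^{\mu} = c_{\gamma\beta}^{\lambda} c_{\lambda\alpha}^{\mu}$. This completes the verification.

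I do not expect any serious obstacle: the argument is essentially a bookkeeping computation in which the hypothesis is tailor-made so that substitution produces precisely one non-trivial relation, namely closed associativity. The only delicate point is maintaining consistent index conventions, since the theorem's statement writes $c_{\alpha\beta}^{\gamma} = \partial^3 F/\partial t^\alpha\partial t^\beta\partial t^\gamma$ while the Open WDVV equation (\ref{Open WDVV equation 0}) contracts third derivatives of $F$ with $\eta^{\mu\lambda}$; reconciling the two is a matter of uniformly raising an index with $\eta$, after which the cancellations described above line up cleanly.
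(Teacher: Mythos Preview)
Your proof is correct and follows essentially the same route as the paper: both substitute the hypothesis for $F^o_{\alpha\beta}$ into the Open WDVV equation, cancel the symmetric and cross terms, and reduce the remaining ``double-$c$'' piece to closed associativity. The paper additionally remarks that the full associativity of the extended algebra has a second component (the case with a $p$-index, the second line of its equation~(\ref{Adam identity 0})), but that equation is literally the hypothesis~(\ref{Adam identity 1 equation1}) rearranged, so nothing is missing from your argument.
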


A remarkable application of the theorem \ref{Adam identity 10} is that the primitive of the Landau Ginzburg superpotential of $A_n$ singularities do satisfy (\ref{Adam identity 1 equation1}). In this way, we do have an efficient way to compute the Open WDVV equations (\ref{Open WDVV equation 0})  for any $A_n$ Dubrovin Frobenius potential.\\

\begin{theorem}\cite{Alcolado Adam}\label{Adam identity 2}

Let the  $F(t^1, t^2, ..., t^n)$ and
\begin{equation}\label{superpotential An}
\lambda(p)=\prod_{i=0}^n (p-x_i), \quad \sum_{i=0}^n x_i=0
\end{equation}
be the closed WDVV solution and the superpotential associated  with $A_n$ Dubrovin Frobenius manifold respectively . Then, the primitive of superpotential (\ref{superpotential An}) does satisfy (\ref{Adam identity 1 equation1}) with respect the WDVV associated with $A_n$ WDVV solution. Moreover, the pair
\begin{equation*}
\left(F, \int \lambda(p) dp\right)
\end{equation*}
do satisfy Open WDVV equations (\ref{Open WDVV equation 0}) with respect the closed $A_n$ WDVV solution $F(t^1, t^2, ..., t^n)$.
\end{theorem}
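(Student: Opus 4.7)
The plan is to split the statement into two parts. The second assertion, that $(F,\int \lambda\,dp)$ solves the Open WDVV equations, is automatic once the first is established: Theorem \ref{Adam identity 10} packages exactly this implication, so no extra work is needed beyond identifying the right hand side of (\ref{Open WDVV equation 0}) once the identity (\ref{Adam identity 1 equation1}) holds. The entire proof therefore reduces to checking
$$
\partial_\alpha \partial_\beta\!\left(\int \lambda(p)\,dp\right) \;=\; \frac{\partial_\alpha \lambda\, \partial_\beta \lambda - c_{\alpha\beta}^{\gamma}\, \partial_\gamma \lambda}{\partial_p \lambda}
$$
for $\lambda(p)=\prod_{i=0}^{n}(p-x_i)$ with $\sum_i x_i=0$, and for $c_{\alpha\beta}^{\gamma}$ the structure constants of the $A_n$ Dubrovin Frobenius manifold written in Dubrovin's flat coordinates.

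To do this, I would begin by recalling Dubrovin's Landau-Ginzburg presentation of $A_n$. The invariant metric and the three-tensor of the multiplication are
$$
\eta_{\alpha\beta} \;=\; \sum_k \res_{p=p_k^*} \frac{\partial_\alpha \lambda\,\partial_\beta \lambda}{\partial_p \lambda}\,dp, \qquad c_{\alpha\beta\gamma} \;=\; \sum_k \res_{p=p_k^*} \frac{\partial_\alpha \lambda\,\partial_\beta \lambda\,\partial_\gamma \lambda}{\partial_p \lambda}\,dp,
$$
the sums ranging over the simple critical points $p_k^*$ of $\lambda$. From this formula, together with a standard partial-fraction argument at each simple pole of the integrand, one extracts the pointwise relation
$$
\partial_\alpha \lambda(p_k^*)\,\partial_\beta \lambda(p_k^*) \;=\; c_{\alpha\beta}^{\gamma}(t)\,\partial_\gamma \lambda(p_k^*),
$$
which is the manifestation of the Frobenius algebra in canonical coordinates $u_k=\lambda(p_k^*)$.

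The second step is to exploit this pointwise relation to show that the numerator $\partial_\alpha \lambda\,\partial_\beta \lambda - c_{\alpha\beta}^{\gamma}\,\partial_\gamma \lambda$ vanishes at each critical point of $\lambda$, so that the right hand side of (\ref{Adam identity 1 equation1}) is a genuine polynomial in $p$ with coefficients depending on the flat coordinates. I would then differentiate both sides in $p$: the left hand side becomes $\partial_\alpha \partial_\beta \lambda$, a polynomial of easily computable degree in $p$, and a direct manipulation of the quotient, using only the product rule and the algebra identity at critical points, should reproduce the same polynomial. A degree count in $p$ finishes the comparison up to a function of $t$ alone.

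The step I expect to be the real obstacle is the last one: fixing the $t$-dependent constant of integration. The identity (\ref{Adam identity 1 equation1}) is only well-posed modulo a function of $t$, and pinning down the correct quasi-homogeneous primitive, namely one for which both sides agree as honest functions of $(p,t)$, requires using the grading of the $A_n$ flat coordinates together with the normalization $\sum x_i=0$. Once that is in place, the rest of the verification is a clean residue calculation combined with Lagrange interpolation at the critical points, and the Open WDVV conclusion then follows by invoking Theorem \ref{Adam identity 10}.
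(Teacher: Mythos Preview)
Your reduction of the problem is sound and matches the paper: once (\ref{Adam identity 1 equation1}) is verified, the Open WDVV conclusion is immediate from Theorem~\ref{Adam identity 10}, and your first step---showing the numerator $\partial_\alpha\lambda\,\partial_\beta\lambda - c_{\alpha\beta}^{\gamma}\partial_\gamma\lambda$ vanishes at every critical point via the canonical-coordinate algebra identity---is exactly Lemma~\ref{Auxiliar lemma Generalized Adam Id 1}. From that point on, however, your route and the paper's diverge.

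The paper does not give a separate proof of Theorem~\ref{Adam identity 2}; it cites Alcolado and instead proves the general Hurwitz statement in Section~6, of which $A_n$ is the special case $H_{0,n}$. That argument studies the difference
\[
\partial_\alpha\partial_\beta\lambda \;-\; \partial_p\!\left[\frac{\partial_\alpha\lambda\,\partial_\beta\lambda - c_{\alpha\beta}^{\gamma}\partial_\gamma\lambda}{\partial_p\lambda}\right]
\]
as a function on the compact curve: holomorphic at the critical points (your step), holomorphic at $\lambda^{-1}(\infty)$ by a residue computation with the period formula $t^\gamma=\res\lambda^{(n+1-\gamma)/(n+1)}dp$ (Lemma~\ref{Auxiliar lemma 2 Laurent Taylor}), hence constant in $p$ by Liouville; the constant is then killed by substituting the oscillatory-integral flat sections (\ref{Mirror symmetry}) into the Dubrovin-connection equation (\ref{Dubrovin connection in flat coordinates}).

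Your alternative---exploit that for $A_n$ everything is polynomial in $p$, so after divisibility by $\partial_p\lambda$ one can compare two polynomials by degree and by their values at the $n$ critical points---is the natural elementary approach in genus zero with a single pole, and it avoids the Dubrovin-connection machinery entirely. The trade-off is that it does not generalise, and the step you describe as ``a direct manipulation of the quotient \ldots\ should reproduce the same polynomial'' is precisely where the content lies: the critical-point identities give you divisibility by $\partial_p\lambda$ but do not by themselves determine the quotient, so you still need an extra input (effectively the behaviour at $p=\infty$, which in the paper's language is Lemma~\ref{Auxiliar lemma 2 Laurent Taylor}) to pin down $\partial_p Q_{\alpha\beta}=\partial_\alpha\partial_\beta\lambda$. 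Both approaches then face the same final issue of the $t$-dependent integration constant, which the paper absorbs into the choice of primitive in Corollary~\ref{main corollary}.
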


In chapter 5 of  \cite{B. Dubrovin2}, Dubrovin constructed a Landau-Ginzburg superpotential associated to any Hurwitz space, which are moduli space of meromorphic covering over $\mathbb{CP}^1$ with fixed ramification profile. In particular, the  Landau-Ginzburg superpotential associated with the $A_n$ Dubrovin Frobenius manifold is the Hurwitz space $H_{0,n}$, which moduli space of genus 0 covering over  $\mathbb{CP}^1$ with a pole of order $n+1$ over $\infty \in \mathbb{CP}^1$. 

Using the Hurwitz Dubrovin Frobenius manifold construction, we generalize Theorem \ref{Adam identity 2} to any Hurwitz space, proving the main result of this paper.

\begin{theorem}\label{Main theorem Almeida}

Let the  $F(t^1, t^2, ..., t^n)$ and $\lambda(p)$
be the closed WDVV solution and the superpotential associated  with any Hurwitz Dubrovin Frobenius manifold respectively . Then, the primitive of superpotential $\lambda(p)$ does satisfy (\ref{Adam identity 10}) with respect the WDVV solution $F(t^1, t^2, ..., t^n)$. Moreover, the pair
\begin{equation*}
\left(F, \int \lambda dp\right)
\end{equation*}
do satisfy  Open WDVV equations (\ref{Open WDVV equation 0}) with respect the closed  WDVV solution $F(t^1, t^2, ..., t^n)$.
\end{theorem}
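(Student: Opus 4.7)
The plan is to reduce the theorem to verifying the Alcolado identity (\ref{Adam identity 1 equation1}) for Dubrovin's superpotential; once that is done, the Open WDVV conclusion follows immediately from Theorem \ref{Adam identity 10}. So the real content is to show that for the superpotential $\lambda(p,t)$ of any Hurwitz Frobenius manifold, the primitive $\Lambda(p,t)=\int^{p}\lambda(q,t)\,dq$ satisfies
$$(\partial_p\lambda)\,\partial_\alpha\partial_\beta\Lambda \;=\; \partial_\alpha\lambda\,\partial_\beta\lambda - c_{\alpha\beta}^{\gamma}\,\partial_\gamma\lambda,$$
where the $c_{\alpha\beta}^{\gamma}$ are the structure constants of the underlying Hurwitz Frobenius manifold.

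First I would recall, from Chapter 5 of \cite{B. Dubrovin2}, Dubrovin's residue representation of the Hurwitz Frobenius structure,
$$\eta(\partial',\partial'') \;=\; \sum_{d\lambda(P)=0}\res_{P}\frac{\partial'\lambda\cdot\partial''\lambda}{d\lambda}\,dp, \qquad c(\partial',\partial'',\partial''') \;=\; \sum_{d\lambda(P)=0}\res_{P}\frac{\partial'\lambda\cdot\partial''\lambda\cdot\partial'''\lambda}{d\lambda}\,dp.$$
The approach is then to regard both sides of the displayed identity as meromorphic functions of $p$ on the covering curve with $t$ held fixed, and to match their principal parts at the two relevant loci: the critical points $P_i$ of $\lambda$ and the ramification points over $\infty\in\mathbb{CP}^1$. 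At the critical points, the subtraction $c_{\alpha\beta}^{\gamma}\partial_\gamma\lambda$ is engineered exactly to kill the naive pole of $(\partial_\alpha\lambda\,\partial_\beta\lambda)/\partial_p\lambda$; this is most transparent in Dubrovin's canonical coordinates $u^i$, where $c_{ij}^k=\delta_i^j\delta_j^k$ and $\partial_{u^i}\lambda(P_j,t)=\delta_{ij}$, so the numerator vanishes to the required order at every $P_k$. This is precisely the residue cancellation that makes the Frobenius structure well-defined in the first place, now repackaged as a pointwise identity on the curve.

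I would then match expansions at the poles of $\lambda$, using Dubrovin's explicit description of the flat coordinates as residues of fractional powers of $\lambda$ and of $\log\lambda$ at the ramification points (and, when $g>0$, as $a$- and $b$-periods of appropriate Abelian differentials). Both $\partial_\alpha\partial_\beta\Lambda$ and the right-hand side of (\ref{Adam identity 1 equation1}) should then be shown to be meromorphic in $p$ with matching principal parts everywhere, so their difference is a function of $t$ alone, absorbed by the integration constant in $\int\lambda\,dp$. The main technical obstacle is executing this matching uniformly across all strata of Hurwitz space, since the flat coordinates split into qualitatively distinct residue-type, logarithmic-type and period-type families, each contributing a different local expansion near the corresponding pole of $\lambda$ and requiring a separate bookkeeping of leading terms. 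Once (\ref{Adam identity 1 equation1}) is established in every stratum, Theorem \ref{Adam identity 10} concludes that $(F,\int\lambda\,dp)$ satisfies the Open WDVV equations (\ref{Open WDVV equation 0}).
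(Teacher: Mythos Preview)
Your proposal follows essentially the same strategy as the paper: analyse the difference locally at the critical points of $\lambda$ (using canonical coordinates, exactly as you outline) and at the ramification points over $\infty$, then invoke compactness of the curve to conclude. The paper's execution differs in two respects worth noting. First, rather than comparing $\partial_\alpha\partial_\beta\Lambda$ and the right-hand side directly, the paper differentiates once more in $p$ and studies $\partial_\alpha\partial_\beta\lambda - \partial_p\bigl[(\partial_\alpha\lambda\,\partial_\beta\lambda - c_{\alpha\beta}^\gamma\partial_\gamma\lambda)/\partial_p\lambda\bigr]$, which is manifestly a single-valued meromorphic function on the compact curve and to which Liouville applies cleanly; integrating back then produces your $t$-dependent constant. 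Second, and this is where the paper gains real efficiency over your proposed case-by-case bookkeeping, the vanishing of periods and of simple-pole residues at the $\infty_k$ is obtained in one stroke from the flatness of the Dubrovin connection: substituting the oscillatory-integral representation $\tilde t_j = z^{-1/2}\int_{Z_j} e^{z\lambda}\,dp$ into $\partial_\alpha\partial_\beta\tilde t - z\,c_{\alpha\beta}^\gamma\partial_\gamma\tilde t = 0$ and integrating by parts yields $\int_{Z_j}\bigl[\partial_\alpha\partial_\beta\lambda - \partial_p(\cdots)\bigr]e^{z\lambda}\,dp = 0$ for every cycle $Z_j$, which handles $a$-cycles, $b$-cycles and small loops around the $\infty_k$ uniformly and also forces the Liouville constant to zero. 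Only the higher-order pole behaviour at the $\infty_k$ then requires the explicit residue formula for the flat coordinates $t^{i,\alpha}$, i.e.\ just one of the five families. So your approach would work, but the Dubrovin-connection shortcut replaces most of the stratum-by-stratum matching you flag as the main obstacle.
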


This result extends Alcolado’s approach beyond $A_n$ singularities, providing a broad framework for computing Open WDVV equations in the context of Hurwitz spaces.

\subsection*{Acknowledgements}
I am grateful to Professor Walcher for pointing out Alcolado's formula for WDVV extensions and for introducing me to the Open Gromov–Witten framework. Furthermore, I would also like to thank Professor Hertling for fruitful discussions regarding F-manifolds.

\section{Flat F manifold and Dubrovin Frobenius manifold}\label{Flat F manifold and Dubrovin Frobenius manifold}

The main goal of this section is to give a brief overview of the geometric aspects of closed and oriented WDVV equations.\\

In chapter 1 of \cite{B. Dubrovin2}, Dubrovin formulated a geometric interpretation of the WDVV equations which is given by the following

\begin{definition}
A Frobenius Algebra $\mathcal{A}$ is a unital, commutative, associative algebra equipped with an invariant, non-degenerate bilinear pairing 
\begin{equation*}
\eta : \mathcal{A} \otimes \mathcal{A} \mapsto \mathbb{C},
\end{equation*}
which is invariant in the following sense:
\begin{equation*}
\eta(A \bullet B, C) = \eta(A, B \bullet C), \quad \forall A, B, C \in \mathcal{A}.
\end{equation*}
\end{definition}

\begin{definition} \cite{B. Dubrovin2},\cite{B. Dubrovin3}
Let $M$ be a complex manifold of dimension $n$. A Dubrovin Frobenius structure over $M$ consists of the following compatible objects:
\begin{enumerate}
\item A family of Frobenius multiplications $\bullet_p : T_pM \times T_pM \mapsto T_pM$ that are analytically dependent on $p \in M$. This family induces a Frobenius multiplication on 
\begin{equation*}
\bullet : \Gamma(TM) \times \Gamma(TM) \mapsto \Gamma(TM).
\end{equation*}
\item A flat pseudo-Riemannian metric $\eta$ on $\Gamma(TM)$, also known as the Saito metric.
\item A unity vector field $e$ that is covariantly constant with respect to the Levi-Civita connection $\nabla$ for the metric $\eta$, i.e., $\nabla e = 0$.
\item Consider the tensor $c(X, Y, Z) := \eta(X \bullet Y, Z)$. We require the 4-tensor
\begin{equation*}
(\nabla_W c)(X, Y, Z)
\end{equation*}
to be symmetric with respect to $X, Y, Z, W \in \Gamma(TM)$.
\item An Euler vector field $E$ determined by:
\begin{equation*}
\nabla \nabla E = 0, \quad \mathcal{L}_E \eta(X, Y) = (2 - d) \eta(X, Y), \quad \mathcal{L}_E c(X, Y, Z) = c(X, Y, Z),
\end{equation*}
where $X, Y, Z \in \Gamma(TM)$. Moreover, we require $\nabla E$ to be diagonalizable.

Let $(t^1, t^2, \ldots, t^n)$ be the flat coordinates with respect to the metric $\eta$. These coordinates are denoted as Saito flat coordinates. The Euler vector $E$ can be explicitly represented as:
\begin{equation*}
E = \sum_{i=1}^n ((1 - q_i) t_i + r_i) \partial_i.
\end{equation*}
\end{enumerate}
\end{definition}

Roughly speaking, Dubrovin Frobenius manifold is the geometric structure that naturally arise in the domain of any WDDV solution, which is given by a family of Frobenius algebra on the sheaf of holomorphic vector fields, a flat structure and some suitable marked vector fields. An important example of WDVV solutions are the generating function of Gromov Witten invariants called Gromov Witten potential. Another source of Dubrovin Frobenius manifolds comes from Landau-Ginzburg superpotential, which are  unfolding of singularities or family of covering over $\mathbb{CP}^1$.\\

At this point, we recall the definition of an F-manifold, which is a weaker version of a Dubrovin Frobenius manifold, as follows.\\

\begin{definition}\cite{David} \cite{Hertling}
An F-manifold $(M, \bullet, e)$ is a complex manifold M which have
\begin{enumerate}
\item A family of commutative and associativity $\bullet_p : T_pM \times T_pM \mapsto T_pM$ that are analytically dependent on $p \in M$. This family induces a multiplication on 
\begin{equation*}
\bullet : \Gamma(TM) \times \Gamma(TM) \mapsto \Gamma(TM).
\end{equation*}

\item A unity vector field $e$.
\item The following integrability condition holds
\begin{equation*}
Lie_{X\bullet Y}(\bullet)=X\bullet Lie_{Y}+Y\bullet Lie_{X}, \quad X,Y \in \Gamma(TM).
\end{equation*}
\item An F-manifold $(M,\bullet,e,E)$ with a Euler vector field is an F-manifold $(M, \bullet, e)$ together with a vector field E called Euler vector field such that
\begin{equation*}
Lie_E(\bullet)=\bullet.
\end{equation*}
\end{enumerate}
\end{definition}

Following the point of view of \cite{David}, \cite{Hertling} , we will define an enrichments of an
F-manifold until we construct a Dubrovin Frobenius manifold.\\

\begin{definition}\cite{David} \cite{Hertling}
\begin{enumerate}
\item A flat F manifold $(M,\bullet,e,\nabla)$ is an F manifold $(M,\bullet,e,)$ together with a flat, torsion free connection $\nabla$ on the $\Gamma(TM)$ which satisfy
\begin{equation}\label{Hertling integrability condition}
\begin{split}
\nabla_X(Y\bullet )-\nabla_Y(X\bullet )&=[X,Y]\bullet, \quad X,Y \in \Gamma(TM),\\
\end{split}
\end{equation}
\begin{equation*}
\begin{split}
&\nabla_Xe=0, \quad X \in \Gamma(TM).\\
\end{split}
\end{equation*}

\item A flat F manifold $(M,\bullet,e,\nabla,E)$ with an Euler vector is a flat F manifold $(M,\bullet,e,\nabla)$ together an Euler vector field E such that
\begin{equation*}
\begin{split}
&\nabla_X\nabla_YE=0, \quad X,Y \in \Gamma(TM).\\
\end{split}
\end{equation*}

\item A Dubrovin Frobenius manifold $(M,\bullet,e,\eta)$ (without an Euler vector field) is an F manifold $(M,\bullet,e)$ together with a holomorphic bilinear paring $\eta$ on the $\Gamma(TM)$ which is multiplication invariant. i.e.
\begin{equation*}
\eta(X\bullet Y,Z)=\eta(X,Y\bullet Z), \quad X,Y,Z \in \Gamma(TM),
\end{equation*}
and its Levi-Connection $\nabla$ is together the geometric structure $(M,\bullet,e,\nabla)$ a flat F-manifold.

\item A Dubrovin Frobenius manifold $(M,\bullet,e,\eta, E)$ (with an Euler vector field) is a Dubrovin Frobenius manifold $(M,\bullet,e,\eta)$ (without an Euler vector field) together an Euler vector field $E$ such that
\begin{equation*}
Lie_E\eta = (2-d)\eta, 
\end{equation*}
for some $d\in \mathbb{C}$.
\end{enumerate}
\end{definition}

Given a flat F-manifold \((M, \bullet, e, \nabla)\) of dimension \(n\), we always have locally defined flat coordinates \(t^1, \dots, t^n\), due to the flatness of \(\nabla\). Moreover, let \(c_{\alpha\beta}^{\gamma}(t)\) be the holomorphic structure constants of the corresponding family of commutative and associative algebras in flat coordinates. That is, given the vector fields  
\[
\partial_{\alpha} = \frac{\partial}{\partial t^{\alpha}}, \quad \alpha = 1, \dots, n,
\]  
the structure constants \(c_{\alpha\beta}^{\gamma}(t)\) gives the multiplication table of the Frobenius algebra as follows 
\begin{equation*}
\partial_{\alpha} \bullet \partial_{\beta} = c_{\alpha\beta}^{\gamma} \partial_{\gamma}.
\end{equation*}

The integrability condition (\ref{Hertling integrability condition}) in flat coordinates can then be expressed as  
\begin{equation*}
\partial_{\delta} c_{\alpha\beta}^{\gamma} = \partial_{\alpha} c_{\delta\beta}^{\gamma}.
\end{equation*}

This implies that there exists a local holomorphic function \(\omega_{\beta}^{\gamma}(t)\) such that  
\begin{equation*}
c_{\alpha\beta}^{\gamma} = \partial_{\alpha} \omega_{\beta}^{\gamma}.
\end{equation*}

Furthermore, due to the commutativity condition of the algebra, \(c_{\alpha\beta}^{\gamma}(t)\) is symmetric in \(\alpha\) and \(\beta\), i.e.,  
\begin{equation*}
\partial_{\alpha} \omega_{\beta}^{\gamma} = \partial_{\beta} \omega_{\alpha}^{\gamma}.
\end{equation*}

As a consequence, there exists a ocal vector-valued function \(F^{\gamma}(t)\) such that  
\begin{equation*}
c_{\alpha\beta}^{\gamma} = \partial_{\alpha} \partial_{\beta} F^{\gamma}.
\end{equation*}

In this setting, a flat F-manifold can be viewed as a domain satisfying a system of partial differential equations (PDEs), which was introduced in \cite{Alexander} as the Oriented WDVV equations.

\section{Landau Ginzburg superpotential}

In this section, we review key aspects of Dubrovin Frobenius manifolds, including the Dubrovin connection, semisimple Dubrovin Frobenius manifolds, and the Landau-Ginzburg superpotential, which will be essential for proving the main result later.

In the analytic theory of Dubrovin Frobenius manifold, there exist two flat connection. The 1st structure connection is  called Dubrovin connection and it is defined below

\begin{definition}\cite{B. Dubrovin2},\cite{B. Dubrovin3}
Consider the following deformation of the Levi-Civita connection defined on a Dubrovin Frobenius manifold $M$:
\begin{equation*}
\tilde{\nabla}_{u}v = \nabla_{u}v + zu\bullet v, \quad u,v\in \Gamma(TM),
\end{equation*}
where $\nabla$ represents the Levi-Civita connection of the metric $\eta$, $\bullet$ denotes the Frobenius product, and $z\in \mathbb{CP}^1$. The Dubrovin connection defined in $M\times\mathbb{CP}^1$ is then given by:
\begin{equation}\label{Dubrovin Connection}
\begin{split}
\tilde{\nabla}_{u}v &= \nabla_{u}v + zu\bullet v,\\
\tilde{\nabla}_{\frac{d}{dz}}\frac{d}{dz} &= 0, \quad \tilde{\nabla}_{v} \frac{d}{dz} = 0,\\
\tilde{\nabla}_{\frac{d}{dz}}v &= \partial_{z}v + E\bullet v - \frac{1}{z}\mu(v).
\end{split}
\end{equation}
Here, $\mu$ is a diagonal matrix given by:
\begin{equation*}
\begin{split}
\mu_{\alpha\beta} = \left(q_{\alpha}-\frac{d}{2}\right)\delta_{\alpha\beta}.
\end{split}
\end{equation*}
\end{definition}

The deformation of Levi-Civita connection (\ref{Dubrovin Connection}) is again a flat connection. In Saito flat coordinates, the Dubrovin connection flat coordinate system, i.e, the solution of 
\begin{equation*}
\tilde \nabla d\tilde t=0,
\end{equation*}
 can be written as
 \begin{equation}\label{flat section Dubrovin connection}
 \begin{split}
&\left(\tilde \nabla_{\alpha} \omega\right)_{\beta}=\partial_{\alpha}\omega_{\beta}-zc_{\alpha\beta}^{\gamma}\omega_{\gamma}=0,\\
&\left(\tilde \nabla_{\frac{d}{dz}} \omega\right)_{\beta}=\partial_{z}\omega_{\beta}-E^{\sigma}c_{\sigma\beta}^{\gamma}\omega_{\gamma}-\frac{\mu_{\beta}}{z}\omega_{\beta}=0,\\
\end{split}
\end{equation}
 where $\omega=d\tilde t=\omega_{\alpha}dt^{\alpha}$. 

Consider the multiplication by the Euler vector field:
\begin{equation}\label{multiplication by the Euler vector field endomorphism}
E\bullet: \Gamma(TM) \mapsto \Gamma(TM), \quad X \in \Gamma(TM) \mapsto E\bullet X \in \Gamma(TM)
\end{equation}
Such an endomorphism gives rise to a bilinear form in the sections of the cotangent bundle of $M$ as follows. Consider $x = \eta(X, )$, $y = \eta(Y, ) \in \Gamma(T^*M)$ where $X, Y \in \Gamma(TM)$. An induced Frobenius algebra is defined on $\Gamma(T^*M)$ by:
\begin{equation*}
x \bullet y = \eta(X \bullet Y)
\end{equation*}

\begin{definition}\cite{B. Dubrovin2},\cite{B. Dubrovin3}
The intersection form is a bilinear pairing in $\Gamma(T^*M)$ defined by:
\begin{equation*}
(\omega_1, \omega_2)^* = \iota_E(\omega_1 \bullet \omega_2)
\end{equation*}
where $\omega_1, \omega_2 \in \Gamma(T^*M)$, and $\bullet$ is the induced Frobenius algebra product in $\Gamma(T^*M)$. The intersection form will be denoted by $g^*$. 
\end{definition}

In the flat coordinates of the Saito metric, is given by
\begin{equation}\label{intersection form generic in flat coordinates}
g^{\alpha\beta} = E^{\epsilon}\eta^{\alpha\mu}\eta^{\beta\lambda}c_{\epsilon\mu\lambda}.
\end{equation}

Recall that a point in a Dubrovin Frobenius manifold is called semisimple if the Frobenius algebra in $T_pM$ is semisimple. It's worth noting that semisimplicity constitutes an open condition. The Dubrovin Frobenius structure around semisimple points becomes rather simple. Specifically, the Frobenius algebra becomes trivial. Moreover, both the Saito metric and the endomorphism resulting from the multiplication by the Euler vector field (\ref{multiplication by the Euler vector field endomorphism}) are diagonal around such a point. More precisely,

\begin{proposition}\cite{B. Dubrovin2},\cite{B. Dubrovin3}
Let $(u_1,u_2,..,u_n)$ be pairwise distinct roots of the characteristic equation
\begin{equation}\label{spectral curve}
det(g^{\alpha\beta} -u\eta^{\alpha\beta} ) = 0.
\end{equation}
Then, the roots $(u_1(t),u_2(t),..,u_n(t))$ can serve as local coordinates, which are called canonical coordinates. In these coordinates, the Frobenius multiplication, Saito metric, unit vector field and Euler vector field can be written as 
\begin{equation}\label{Dubrovin Frobenius structure in canonical coordinates 1}
\begin{split}
&\frac{\partial}{\partial u_i}\bullet \frac{\partial}{\partial u_j}=\delta_{ij}\frac{\partial}{\partial u_i},\quad \eta=\sum_{i=1}^{n}\psi_{i1}^2 (du_i)^2, \quad e=\sum_{i=1}^n\frac{\partial}{\partial u_i},\quad E=\sum_{i=1}^n u_i\frac{\partial}{\partial u_i},\\
\end{split}
\end{equation}
where the matrix $\Psi=\left( \psi_{i\alpha}\right)$ is given by
\begin{equation*} 
\psi_{i\alpha}=\psi_{i1}\frac{\partial u_i}{\partial t^{\alpha}}.
\end{equation*}

\end{proposition}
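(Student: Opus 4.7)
The plan is to reduce the problem to a linear-algebraic calculation at a single point and then transport that structure to a neighborhood using the flat F-manifold integrability condition. Fix a semisimple point $p\in M$. In flat coordinates the endomorphism $E\bullet\colon T_pM\to T_pM$ has matrix $(E\bullet)^{\alpha}{}_{\beta}=E^{\sigma}c^{\alpha}_{\sigma\beta}$; using equation (\ref{intersection form generic in flat coordinates}) one checks that this matrix coincides with $g^{\alpha\gamma}\eta_{\gamma\beta}$, so the characteristic polynomial $\det(g^{\alpha\beta}-u\eta^{\alpha\beta})$ agrees, up to the nonzero factor $\det\eta^{-1}$, with the characteristic polynomial of $E\bullet$. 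Since its roots $u_1,\ldots,u_n$ are pairwise distinct, $E\bullet$ is diagonalizable and the commutative algebra $(T_pM,\bullet)$ is semisimple; the structure theorem for commutative semisimple $\mathbb{C}$-algebras then produces a unique basis of orthogonal idempotents $\pi_1(p),\ldots,\pi_n(p)$ with $\pi_i\bullet\pi_j=\delta_{ij}\pi_i$, $\sum_i\pi_i=e$, and $E\bullet\pi_i=u_i\pi_i$.

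Since semisimplicity is an open condition, the idempotents extend to holomorphic vector fields $\pi_i$ on a neighborhood $U$ of $p$, forming a local frame. The crux of the argument is to show $[\pi_i,\pi_j]=0$ for $i\neq j$. For this I would apply the integrability relation (\ref{Hertling integrability condition}), $\nabla_X(Y\bullet)-\nabla_Y(X\bullet)=[X,Y]\bullet$, with $X=\pi_i$ and $Y=\pi_j$, and test the resulting identity on each basis vector $\pi_k$. Expanding via the Leibniz rule and the orthogonality relations $\pi_j\bullet\pi_k=\delta_{jk}\pi_j$ yields linear equations for the coefficients of $\nabla_{\pi_i}\pi_j$ in the $\pi_k$-frame that force $[\pi_i,\pi_j]\bullet\pi_k=0$ for every $k$; nondegeneracy of the product then gives $[\pi_i,\pi_j]=0$. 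By the Frobenius theorem for simultaneously integrable commuting vector fields, there exist local holomorphic coordinates $w_1,\ldots,w_n$ on $U$ with $\partial/\partial w_i=\pi_i$.

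It remains to identify the $w_i$ with the eigenvalues $u_i$. Differentiating $E\bullet\pi_i=u_i\pi_i$ along $\pi_j$, and again invoking the integrability identity together with $\pi_j\bullet\pi_k=\delta_{jk}\pi_j$, the off-diagonal contributions carry factors $(u_k-u_j)$ and are pinned down by distinctness of the eigenvalues, leaving $\pi_j(u_i)=\delta_{ij}$. Therefore $du_1,\ldots,du_n$ are linearly independent, $(u_1,\ldots,u_n)$ is a coordinate system, and $\partial/\partial u_i=\pi_i$; this simultaneously establishes the coordinate property and the product rule $\partial_{u_i}\bullet\partial_{u_j}=\delta_{ij}\partial_{u_i}$.

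The remaining assertions are essentially automatic. Diagonality of $\eta$ follows from the Frobenius invariance: $\eta(\pi_i,\pi_k)=\eta(\pi_i\bullet e,\pi_k)=\eta(e,\pi_i\bullet\pi_k)=\delta_{ik}\eta(e,\pi_i)$, so setting $\psi_{i1}^2:=\eta(\pi_i,\pi_i)$ yields $\eta=\sum_i\psi_{i1}^2(du_i)^2$. The relations $e=\sum_i\pi_i$ and $E=E\bullet e=\sum_i u_i\pi_i$ were already obtained in the first step. For the final formula, expand $\partial_{\alpha}=\sum_i(\partial u_i/\partial t^{\alpha})\,\partial/\partial u_i$ and pair with the $\eta$-dual of the canonical frame to read off $\psi_{i\alpha}=\psi_{i1}(\partial u_i/\partial t^{\alpha})$. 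The principal obstacle throughout is the commutation $[\pi_i,\pi_j]=0$; once this is secured, the rest is a combination of standard linear algebra and the multiplicative invariance of $\eta$.
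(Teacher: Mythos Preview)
The paper does not supply its own proof of this proposition; it is quoted as background from Dubrovin's lecture notes and used without argument. Your outline reconstructs precisely the standard proof found in those references: identify the roots of (\ref{spectral curve}) with the spectrum of $E\bullet$ via (\ref{intersection form generic in flat coordinates}), build the idempotent frame $\pi_1,\dots,\pi_n$ at a semisimple point, show the idempotents commute, integrate them to coordinates, and then read off the product, metric, unit and Euler formulae from the algebra. So there is nothing different to compare against; you are filling in what the paper omits, along the expected lines.

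The argument is sound, but the step you rightly call ``the principal obstacle'' is under-argued as written. The sentence ``yields linear equations \ldots that force $[\pi_i,\pi_j]\bullet\pi_k=0$'' hides the actual computation. If you write $\nabla_{\pi_a}\pi_b=\sum_c\Gamma^c_{ab}\pi_c$ and $[\pi_i,\pi_j]=\sum_l B^l_{ij}\pi_l$, then testing (\ref{Hertling integrability condition}) with $X=\pi_i$, $Y=\pi_j$ on $\pi_k$ gives, for $k\notin\{i,j\}$, the relation $-\Gamma^j_{ik}\pi_j+\Gamma^i_{jk}\pi_i=B^k_{ij}\pi_k$, whence $B^k_{ij}=0$; the case $k=j$ then gives $B^j_{ij}=0$ by isolating the $\pi_j$-component, and antisymmetry finishes. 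Displaying this would make the proof self-contained. Likewise, for $\pi_j(u_i)=\delta_{ij}$ the cleanest route is to apply $Lie_E(\bullet)=\bullet$ to the idempotent relation $\pi_i\bullet\pi_i=\pi_i$ and compute $Lie_E\pi_i=-\sum_k\pi_i(u_k)\pi_k$; your suggestion of differentiating the eigenvector equation works but is less direct. These are presentational refinements; the logical skeleton is correct.
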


At this stage, we can define a Landau Ginzburg superpotental, which can be found in definition 5.7 of  \cite{B. Dubrovin3}.

\begin{definition}\label{Landau-Ginzburg superpotential definition}\cite{B. Dubrovin3}
Let $D$ be an open domain of a Riemann surface. A Landau-Ginzburg superpotential associated with a Dubrovin Frobenius manifold $M$ of dimension $n$ consists of a function $\lambda(p,u)$ on $D \times M$ and an Abelian differential $\phi$ in $D$ satisfying
\begin{itemize}
\item The critical values of  $\lambda(p,u)$ are the canonical coordinates $(u_1,..,u_n)$. In other words, the canonical coordinates $(u_1,u_2,. ,u_n)$ are defined by the following system
\begin{equation*}
\begin{split}
&\lambda(p_i)=u_i,\\
&\frac{d \lambda}{dp}\left( p_i\right)=0.
\end{split}
\end{equation*}
\item For some cycles $Z_1,..,Z_n$ in $D$ the integrals 
\begin{equation}\label{Mirror symmetry}
\tilde t_j(z,u)=\frac{1}{z^{\frac{1}{2}}}\int_{Z_j} e^{z\lambda(p)}\phi, \quad j=1,..,n.
\end{equation}
converges and give a system of independent flat coordinates for the Dubrovin Connection $\tilde\nabla$ in canonical coordinates, i.e., the matrix
\begin{equation*}
\begin{split}
Y=\Psi\eta^{-1}\omega:=\left(\psi_{i\alpha}\eta^{\alpha\beta}\partial_{\beta}\tilde t   \right)
\end{split}
\end{equation*}
is a solution of the following system
\begin{equation}\label{Dubrovin connection in canonical coordinates 1}
\begin{split}
\frac{\partial Y}{\partial u_i}&=\left( zE_i+V_i \right)Y,\\
\frac{dY}{\partial z}&=\left( U+\frac{V}{z} \right)Y,\\
\end{split}
\end{equation}
where 
\begin{equation}\label{U,V, Vi in canonical coordinates}
\begin{split}
U=\Psi\mathcal{U}\Psi^{-1}, \quad V=\Psi\mu\Psi^{-1}, \quad E_i=\left( \delta_{ij}\delta_{ik}  \right), \quad V_i:=\frac{\partial \Psi}{\partial u_i}\Psi^{-1}.
\end{split}
\end{equation}

\item 
The following expressions for the tensors  Saito metric $\eta$,  intersection form $g^{*}$ and the structure constants $c$ holds true
\begin{equation}\label{residue expression for eta, intersection form and structure constants}
\begin{split}
\eta_{ij}&=\sum \res_{d\lambda=0}   \frac{ \partial_i\lambda  \partial_j\lambda }{d_p\lambda        }\phi, \\
g_{ij}&=\sum \res_{d\lambda=0}  \frac{ \partial_i\log\lambda  \partial_j\log\lambda }{d_p\log\lambda        } \phi,\\
c_{ijk}&=\sum \res_{d\lambda=0}   \frac{ \partial_i\lambda  \partial_j\lambda  \partial_k\lambda  }{d_p\lambda        }\phi.\\
\end{split}
\end{equation}
\end{itemize}

\end{definition}

\section{Hurwitz Dubrovin Frobenius manifolds}

In this section, we introduce the definition of Hurwitz space and the prescription of the Dubrovin Frobenius structure given the data of a Hurwitz space and a suitable choice of Abelian differentials on it. The main reference of this section are \cite{B. Dubrovin2} and \cite{V. Shramchenko}.\\

\begin{definition}
The Hurwitz space $H_{g,n_0 ,...,n_m}$ is the moduli space of curves $C_g$ of genus g
endowed with a N branched covering, $ \lambda: C_g\mapsto \mathbb{C}P ^1$ of $\mathbb{C}P^1$ with $m + 1$ branching points
over $\infty\in \mathbb{C}P^1$ of branching degree $n_i + 1, i = 0, . . . , m.$
\end{definition}

\begin{definition}
Two pairs $(C_g,\lambda)$ and $(\tilde C_g,\tilde\lambda)$ are said Hurwitz-equivalent if there exist an analytic isomorphic $F:C_g\mapsto \tilde C_g$ such that 
\begin{equation*}
\lambda\circ F =\tilde\lambda.
\end{equation*}
\end{definition}

Roughly speaking, Hurwitz spaces $H_{g,n_0 ,...,n_m}$ are moduli spaces of meromorphic functions which realise a Riemann surface of genus $g$ $C_g$ as covering over  $\mathbb{C}P ^1$ with a fixed ramification profile. 

\textbf{Example 1:}\\
The Hurwitz space is $H_{0,n}$ is  given by
\begin{equation*}
H_{0,n}=\left\{ \lambda(p,x_0,x_1,x_2,..,x_n)=\prod_{i=0}^n (p-x_i): \sum_{i=0}^n x_i=0\right\}
\end{equation*}

\textbf{Example 2:}\\
The Hurwitz space is $H_{0,n-1,0}$ is given by
\begin{equation*}
H_{0,n}=\left\{ \lambda(p,a_2,a_3,..,a_{n+1},a_{n+2})=p^n+ a_2p^{n-2}+...+a_np+a_{n+1}+\frac{a_{n+2}}{p}   \right\}
\end{equation*}

\textbf{Example 3:}\\
The Hurwitz space is $H_{1,n}$ is given by
\begin{equation*}
H_{1,n}=\left\{ \lambda(p,v_0,v_1,..,v_n,\tau)=\frac{\prod_{i=0}^n \theta_1(p-v_i,\tau)e^{-2\pi i u}}{\theta_1^{n+1}(v,\tau)}: \sum_{i=0}^n v_i=0   \right\}
\end{equation*}

\textbf{Example 4:}\\
The Hurwitz space is $H_{1,n-1,0}$ is given by
\begin{equation*}
H_{1,n-1,0}=\left\{ \lambda(p,v_0,..,v_n,v_{n+1},\tau)=\frac{\prod_{i=0}^n \theta_1(p-v_i,\tau)e^{-2\pi i u}}{\theta_1^{n}(v,\tau)\theta_1(v+(n+1)v_{n+1},\tau)}:\sum_{i=0}^n v_i=-(n+1)v_{n+1}   \right\}
\end{equation*}

The covering $\tilde H = \tilde H_{g,n_0 ,...,n_m} $ consist of the sets
\begin{equation*}
(C_g; \lambda; k_0, . . . , k_m; a_1, . . . , a_g, b_1, . . . , b_g) \in \tilde H_{g,n_0 ,...,n_m}
\end{equation*}
where $a_1, . . . , a_g, b_1, . . . , b_g \in H_1(C_g,\mathbb{Z})$ are the canonical symplectic basis, and  $k_0, . . . , k_m $ are  roots of $\lambda$ near  $\infty_0 ,\infty_1 , . . . , \infty_m$  of the orders
$n_0 + 1, n_1 + 1, . . . , n_m + 1.$ resp., 
\begin{equation*}
k^{-n_i-1}_i (P) = \lambda(P),\quad  \text{P near $\infty_i$}.
\end{equation*}

Over the space $\tilde H_{g,n_0 ,...,n_m}$, it is possible to introduce a Dubrovin-Frobenius structure by taking as canonical coordinates $(u_1,u_2,..u_n)$ as solution of the following system:
\begin{equation*}
\left\{u_i=\lambda(P_i), \frac{d\lambda}{dp}(P_i)=0\right\}
\end{equation*}
The Dubrovin-Frobenius structure is specified by the following objects:
\begin{equation}\label{Multiplication}
 \textnormal{multiplication}\quad  \partial_i\bullet\partial_j=\delta_{ij}\partial_i,\textnormal{where}\quad \partial_i=\frac{\partial}{\partial u_i},
\end{equation}
\begin{equation}\label{Euler vector field}
 \textnormal{Euler vector field}\quad E=\sum_i u_i\partial_i,
\end{equation}
\begin{equation}\label{Unit vector field}
 \textnormal{unit vector field}\quad e=\sum_i \partial_i,
\end{equation}
and the metric $\eta$ defined by the formula
\begin{equation}\label{metric}
ds^2_{\phi}=\sum \text{res}_{P_i}\frac{\phi^2}{d\lambda}(du_i)^2 ,
\end{equation}
where $\phi$ is some primary differential of the underlying Riemann surface $C_g$.
Note that the Dubrovin-Frobenius manifold structure depends on the meromorphic function $\lambda$, and on the primary differential $\phi$. The list of possible  primary differential $\phi$ are described  in chapter 5 of \cite{B. Dubrovin2}.\\

Consider a multivalued function $p$ on C by taking the integral of $\phi$

\begin{equation*}
p(P)=v.p\int_{\infty_0}^P \phi
\end{equation*}
The principal value is defined by omitting the divergent part, when necessary, because $\phi$ may be divergent at $\infty_0$, as function of the local parameter $k_0$. 

Consider the following differential 
\begin{equation*}
\phi=dp
\end{equation*}
Let $\tilde H_{\phi}$ be the open domain in $\tilde H$ specifying by the condition
\begin{equation*}
\phi(P_i)\neq 0.
\end{equation*}

\begin{theorem}\label{flatcoord}\cite{B. Dubrovin2}
For any primary differential $\phi$ of the list in \cite{B. Dubrovin2} the multiplication (\ref{Multiplication}), the
unity (\ref{Unit vector field}), the Euler vector field (\ref{Euler vector field}), and the metric $\ref{metric}$ determine a structure of Dubrovin Frobenius manifold on $\tilde H_{\phi}$. The corresponding flat coordinates $t_A, A = 1, . . . ,N$ consist of the five parts
\begin{equation}
t_A=(t^{i,\alpha},i=0,..m,\alpha=1,..,n_i;p^i,q^i,i=1,..,m,r^i,s^i,i=1,..g)
\end{equation}
where
\begin{equation}\label{flat coordinates of Saito metric as periods}
t^{i,\alpha}=\text{res}_{\infty_i} k_i^{-\alpha}pd\lambda             \quad  i=0,..m,\alpha=1,..,n_i
\end{equation}
\begin{equation}
p^{i}=v.p\int_{\infty_0}^{\infty_i}dp             \quad  i=1,..m.
\end{equation}
\begin{equation}
q^{i}=-\text{res}_{\infty_i} \lambda dp         \quad  i=1,..m.
\end{equation}
\begin{equation}
r^{i}=\int_{b_i}dp             \quad  i=1,..g.
\end{equation}
\begin{equation}
s^{i}=-\frac{1}{2\pi i}\int_{a_i} \lambda dp         \quad  i=1,..g.
\end{equation}
\end{theorem}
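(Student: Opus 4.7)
The plan is to split the statement into two assertions and handle them separately: (A) the multiplication, unit, Euler field, and metric defined by (\ref{Multiplication})--(\ref{metric}) really do define a Dubrovin Frobenius structure on the open subset $\tilde H_\phi$, and (B) the five families of residues and periods $t^{i,\alpha}, p^i, q^i, r^i, s^i$ form a complete system of flat coordinates for the Saito metric $\eta$. I would work throughout in the canonical coordinates $u_i = \lambda(P_i)$, which are genuine local coordinates precisely on $\tilde H_\phi$ because the condition $\phi(P_i) \neq 0$ together with $(d\lambda)(P_i) = 0$ ensures the critical points of $\lambda$ are simple and vary holomorphically with the moduli.

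For (A), the axioms coming from the canonical form -- semisimplicity of $\bullet$, the unit $e = \sum \partial_i$, the Euler field $E = \sum u_i \partial_i$, diagonality (hence invariance) of $\eta$ under $\bullet$, and the homogeneity identities $\mathcal{L}_E \eta = (2-d)\eta$ and $\mathcal{L}_E c = c$ -- are either built in or follow by a direct Lie derivative computation, with $d$ read off from the degree of the chosen primary differential $\phi$. The remaining and deepest point is flatness of $\eta$, together with potentiality of the structure $3$-tensor $c$. Both reduce to the Darboux--Egoroff equations for the rotation coefficients
\begin{equation*}
\gamma_{ij} = \frac{\partial_j \psi_{i1}}{\psi_{j1}}, \qquad \psi_{i1}^2 = \res_{P_i}\frac{\phi^2}{d\lambda},
\end{equation*}
namely $\gamma_{ij}=\gamma_{ji}$, $\partial_k\gamma_{ij}=\gamma_{ik}\gamma_{kj}$ for distinct $i,j,k$, and $\sum_k \partial_k \gamma_{ij}=0$ (the Egoroff condition, equivalent to $\nabla e = 0$). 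I would derive these by writing $\gamma_{ij}$ as a normalized evaluation of the Bergman bi-differential $B(P,Q)$ at the critical points of $\lambda$ and invoking the Rauch variational formulas, which express how $B$ and $\phi$ vary under deformations of the branched cover as residues at the $P_i$. The required identities fall out of the symmetry of $B(P,Q)$ and its transformation rules.

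For (B), the strategy is to verify directly that each candidate flat coordinate $t_A$ satisfies the flat-section equation $\tilde\nabla\, dt_A = 0$ at $z=0$, equivalently the Hessian condition $\partial_j\partial_k t_A = \Gamma^{\ell}_{jk}\,\partial_\ell t_A$ in canonical coordinates. Using the Rauch formulas to differentiate (\ref{flat coordinates of Saito metric as periods}) and the companion expressions for $p^i, q^i, r^i, s^i$ with respect to $u_k$, every such derivative collapses to a residue at $P_k$ of the shape $\res_{P_k}(\partial_A \lambda/d\lambda)\phi$; these assemble into the matrix $\Psi=(\psi_{i\alpha})$ of the semisimple frame. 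Symmetry of $\partial_j\partial_k t_A$ then follows from the symmetry of $B(P_j,P_k)$, and independence of the five families is confirmed by a dimension count: their total number equals $\sum_{i=0}^m n_i + 2m + 2g$, which matches $\dim \tilde H_{g,n_0,\dots,n_m}$ via Riemann--Hurwitz. The role of the extra data of the cover $\tilde H$ -- a canonical homology basis and the choice of local parameters $k_i$ near each $\infty_i$ -- is exactly to make $p^i, q^i, r^i, s^i$ and the residues $t^{i,\alpha}$ single-valued.

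The principal difficulty lies in step (A), specifically in executing the Rauch variational computation to obtain the Darboux--Egoroff system. The bookkeeping must track how $\phi$ behaves near each marked point $\infty_i$ (its local expansion changes with the primary differential type from Dubrovin's list) and how the branch locus of $\lambda$ moves under deformation. Once Darboux--Egoroff is in hand, flatness of $\eta$, the existence of the potential for $c$, and the correctness of the flat-coordinate formulas all follow in a relatively uniform way.
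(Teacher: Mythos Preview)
The paper does not prove this theorem at all: it is quoted verbatim as a background result from \cite{B. Dubrovin2} (Dubrovin's lecture notes), with no argument supplied. So there is no ``paper's own proof'' to compare your proposal against.

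That said, your outline is essentially the strategy of the original source. Dubrovin (and later Shramchenko in \cite{V. Shramchenko}) establishes the Frobenius structure by reducing flatness of $\eta$ and potentiality of $c$ to the Darboux--Egoroff system for the rotation coefficients, proved via Rauch variational formulas for the normalized Abelian differentials and the bidifferential $W(P,Q)$ on the moving curve; the flat coordinates are then identified with the listed residues and periods by checking the flat-section condition and counting dimensions. Your sketch tracks this line accurately, including the role of the cover $\tilde H$ in making the periods single-valued. The only point I would flag is that the number of simple ramification points, hence $\dim\tilde H_{g,n_0,\dots,n_m}$, comes out to $\sum_{i=0}^m(n_i+1)+2g-1=\sum n_i+m+2g$ by Riemann--Hurwitz, not $\sum n_i+2m+2g$; the count of flat coordinates in the statement matches this because $\alpha$ runs from $1$ to $n_i$ for each $i=0,\dots,m$, giving $\sum_{i=0}^m n_i$ coordinates of type $t^{i,\alpha}$, plus $2m$ of type $p^i,q^i$, plus $2g$ of type $r^i,s^i$ --- so you should double-check your dimension count against the ramification formula.
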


\section{Alcolado Open WDVV equations formula}

The main objective of this section is to revisit the derivation of the Alcolado Open WDVV formula (\ref{Adam identity 1 equation1}). To achieve this, we first recall the proof of Theorem 1.2 from \cite{Alcolado Adam}.\\

Let  $F(t^1, t^2, .., t^n)$ be a (closed) WDVV solution and the correspondent pair
\begin{equation*}
\Omega=\Omega(t^1,..,t^n), \quad \frac{\partial\Omega}{\partial p}\neq 0,
\end{equation*}

such that $(F, \Omega)$ satisfy Open WDVV equations. More explicitly, 

\begin{equation*}
c_{\alpha\beta}^{\gamma}=\eta^{\gamma\mu}\frac{\partial^3 F}{\partial t^{\alpha}\partial t^{\beta}\partial t^{\gamma}}, \quad \Omega_{\alpha\beta}=\frac{\partial^2 \Omega}{\partial t^{\alpha}\partial t^{\beta}}, \quad \Omega_{p\beta}=\frac{\partial^2 \Omega}{\partial p\partial t^{\beta}}, \quad \Omega_{pp}=\frac{\partial^2 \Omega}{\partial p\partial p},
\end{equation*}
then the Open WDVV associative equation become
\begin{equation}\label{Adam identity 0}
\begin{split}
c_{\alpha\beta}^{\delta}\Omega_{\gamma\delta}+\Omega_{\alpha\beta}\Omega_{\gamma p}&=c_{\gamma\beta}^{\delta}\Omega_{\alpha\delta}+\Omega_{\gamma\beta}\Omega_{\alpha p},\\
c_{\alpha\beta}^{\delta}\Omega_{p \delta}+\Omega_{\alpha\beta}\Omega_{p p}&=\Omega_{\alpha p}\Omega_{\beta p},\\
\end{split}
\end{equation}
Assuming that there exist $\lambda(p,t^1,t^2,..,t^n)$ such that $\lambda = \Omega_p$, then the second equation of (\ref{Adam identity 0}) can be written as
\begin{equation}\label{Adam identity 01}
\begin{split}
\Omega_{\alpha\beta}&=\frac{\lambda_{\alpha }\lambda_{\beta}-c_{\alpha\beta}^{\delta}\lambda_{\delta}}{\lambda_{p}} .\\
\end{split}
\end{equation}
Moreover, the first equation of (\ref{Adam identity 0}) is automatic satisfied due to (\ref{Adam identity 01}). Indeed,
\begin{equation*}
\begin{split}
c_{\alpha\beta}^{\delta}\left(\frac{\lambda_{\gamma }\lambda_{\delta}-c_{\gamma\delta}^{\epsilon}\lambda_{\epsilon}}{\lambda_{p}}  \right)+\left(\frac{\lambda_{\alpha }\lambda_{\beta}-c_{\alpha\beta}^{\delta}\lambda_{\delta}}{\lambda_{p}}\right)\lambda_{\gamma}&=-\frac{c_{\alpha\beta}^{\delta}c_{\gamma\delta}^{\epsilon}\lambda_{\epsilon}}{\lambda_p}+\frac{\lambda_{\alpha}\lambda_{\beta}\lambda_{\gamma}}{\lambda_p},\\
&=-\frac{c_{\gamma\beta}^{\delta}c_{\alpha\delta}^{\epsilon}\lambda_{\epsilon}}{\lambda_p}+\frac{\theta_{\alpha}\lambda_{\beta}\lambda_{\gamma}}{\lambda_p},\\
&=c_{\gamma\beta}^{\delta}\left(\frac{\lambda_{\alpha }\lambda_{\delta}-c_{\alpha\delta}^{\epsilon}\lambda_{\epsilon}}{\lambda_{p}}  \right)+\left(\frac{\lambda_{\gamma }\lambda_{\beta}-c_{\gamma\beta}^{\delta}\lambda_{\delta}}{\lambda_{p}}\right)\lambda_{\alpha}.
\end{split}
\end{equation*}

Summarising,

\begin{theorem}\cite{Alcolado Adam}
For a fixed solution $F(t^1, t^2, ..., t^n)$ of the closed WDVV equations (\ref{closed WDVV equation}). Let $\lambda= \lambda(p,t^1, . . . , t^n)$ be a function satisfying $\frac{\partial \lambda}{\partial p}\neq  0$ and
 \begin{equation}\label{Adam identity 1 equation}
\partial_{\alpha}\partial_{\beta}\left(\int \lambda(p)dp\right)= \frac{ \partial_{\alpha}\lambda \partial_{\beta}\lambda-c_{\alpha\beta}^{\gamma}\partial_{\gamma}\lambda}{ \partial_{p}\lambda},
\end{equation}
where $c_{\alpha\beta}^{\gamma}=\frac{\partial^3 F}{\partial t^{\alpha}\partial t^{\beta}\partial t^{\gamma}}$.
Then the pair $(F,\int\lambda dp)$ do satisfy an Open WDVV equations (\ref{Open WDVV equation 0}) with respect the closed WDVV solution $F(t^1, t^2, ..., t^n)$.

\end{theorem}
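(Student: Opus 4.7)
The plan is to set $F^o(p,t^1,\ldots,t^n) := \int \lambda(p,t)\,dp$ so that $\partial_p F^o = \lambda$, and then verify the Open WDVV equations (\ref{Open WDVV equation 0}) directly for this candidate. I would first split the system into two subsystems depending on whether the index $n+1$ appears among the free indices or not. The subsystem in which one index equals $n+1$ reads
$$c_{\alpha\beta}^{\delta}\lambda_{\delta} + \bigl(\partial_\alpha \partial_\beta F^o\bigr)\lambda_{p} = \lambda_{\alpha}\lambda_{\beta},$$
which, after solving for $\partial_\alpha \partial_\beta F^o$, is precisely the hypothesis (\ref{Adam identity 1 equation1}). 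Hence that subsystem is satisfied by assumption, and the entire problem reduces to verifying the remaining subsystem whose free indices all lie in $\{1,\dots,n\}$.

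For the remaining subsystem, I would substitute the closed-form expression $\partial_\alpha \partial_\beta F^o = (\lambda_\alpha \lambda_\beta - c_{\alpha\beta}^\gamma \lambda_\gamma)/\lambda_p$ into both sides and clear the common denominator $\lambda_p$. Upon expansion, the triple-product term $\lambda_\alpha \lambda_\beta \lambda_\gamma$ and the mixed term $c_{\alpha\beta}^\delta \lambda_\delta \lambda_\gamma$ appear symmetrically in $\alpha \leftrightarrow \gamma$ and cancel between the two sides. What survives is the identity $c_{\alpha\beta}^\delta c_{\gamma\delta}^\epsilon \lambda_\epsilon = c_{\gamma\beta}^\delta c_{\alpha\delta}^\epsilon \lambda_\epsilon$, which is nothing but the associativity of the Frobenius product encoded by $F$. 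This is guaranteed by the assumption that $F$ solves the closed WDVV equations, so the verification closes.

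There is no serious obstacle in this argument: the proof is essentially an algebraic bookkeeping exercise once the correct split is identified. The only conceptual ingredient — and the point I would want to emphasize — is the clean division of labor between the two subsystems. The hypothesis (\ref{Adam identity 1 equation1}) is \emph{precisely} what is needed to match the block containing one $p$-index, and the closed WDVV associativity of $F$ is \emph{precisely} what is needed to match the block with no $p$-indices. Quasi-homogeneity of $F^o$, which enters the definition of an Open WDVV solution, follows automatically from the quasi-homogeneity of $\lambda$ inherited from the Hurwitz construction; I would leave that as a closing remark.
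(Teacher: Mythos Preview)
Your proposal is correct and follows essentially the same route as the paper: split the Open WDVV system into the block with a $p$-index (which is exactly the hypothesis (\ref{Adam identity 1 equation})) and the block without, then substitute the hypothesis into the latter and reduce to the closed associativity $c_{\alpha\beta}^{\delta}c_{\gamma\delta}^{\epsilon}=c_{\gamma\beta}^{\delta}c_{\alpha\delta}^{\epsilon}$. One small caveat: your closing remark that quasi-homogeneity of $F^o$ is ``inherited from the Hurwitz construction'' is out of place here, since this theorem is stated for an arbitrary $\lambda$ satisfying (\ref{Adam identity 1 equation}), not only for Hurwitz superpotentials; the paper's proof, like yours, addresses only the associativity part of the Open WDVV equations.
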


\section{From LG superpotential  to Open WDVV solution}

This section is devoted to prove the theorem \ref{Main theorem Almeida}.

\begin{lemma}\label{Auxiliar lemma Generalized Adam Id 1}
 Let M be a n-dimensional Dubrovin Frobenius manifold with flat coordinates $t^{\alpha}$  and $\lambda(p)$ its Landau-Ginzburg superpotential. Then, the function 
 \begin{equation}\label{Bounded rhs of Adam id}
\partial_{\alpha}\partial_{\beta}\lambda-\frac{\partial}{\partial p}\left[ \frac{ \partial_{\alpha}\lambda \partial_{\beta}\lambda-c_{\alpha\beta}^{\gamma}\partial_{\gamma}\lambda}{ \partial_{p}\lambda}  \right]
\end{equation}
is holomorphic at all critical points of $\lambda$.

\end{lemma}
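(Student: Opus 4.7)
The only potential singularities of the expression (\ref{Bounded rhs of Adam id}) arise where the denominator $\partial_{p}\lambda$ vanishes, i.e. at critical points of $\lambda$; everywhere else it is manifestly holomorphic. My plan is therefore to prove that the quotient
\begin{equation*}
Q_{\alpha\beta} := \frac{\partial_{\alpha}\lambda\,\partial_{\beta}\lambda - c_{\alpha\beta}^{\gamma}\partial_{\gamma}\lambda}{\partial_{p}\lambda}
\end{equation*}
extends holomorphically across each critical point $P_{l}$; once this is known, $\partial_{p}Q_{\alpha\beta}$ is automatically holomorphic there, and since $\partial_{\alpha}\partial_{\beta}\lambda$ is regular in $p$ the whole combination is regular. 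Restricting to the semisimple locus on which $\tilde H_{\phi}$ sits, each $P_{l}$ is a Morse critical point, so $\partial_{p}\lambda$ has a simple zero at $P_{l}$; it will therefore be enough to show that the numerator of $Q_{\alpha\beta}$ vanishes at every $P_{l}$.

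The main step is the identity $c_{\alpha\beta}^{\gamma}\partial_{\gamma}\lambda\big|_{p=P_{l}} = \partial_{\alpha}\lambda\,\partial_{\beta}\lambda\big|_{p=P_{l}}$. First, from the defining relation $u_{l}(t)=\lambda(P_{l}(t),t)$ together with the critical condition $\partial_{p}\lambda(P_{l},t)=0$, differentiation in $t^{\alpha}$ gives
\begin{equation*}
\partial_{\alpha}\lambda\big|_{p=P_{l}} \;=\; \frac{\partial u_{l}}{\partial t^{\alpha}}.
\end{equation*}
Second, the canonical multiplication $\partial_{i}\bullet\partial_{j}=\delta_{ij}\partial_{i}$ translates in flat coordinates into
\begin{equation*}
c_{\alpha\beta}^{\gamma} \;=\; \sum_{i}\frac{\partial u_{i}}{\partial t^{\alpha}}\,\frac{\partial u_{i}}{\partial t^{\beta}}\,\frac{\partial t^{\gamma}}{\partial u_{i}},
\end{equation*}
and contracting this with $\partial_{\gamma}\lambda\big|_{p=P_{l}}=\partial_{\gamma}u_{l}$ collapses, via $\frac{\partial t^{\gamma}}{\partial u_{i}}\frac{\partial u_{l}}{\partial t^{\gamma}}=\delta_{il}$, to $\frac{\partial u_{l}}{\partial t^{\alpha}}\frac{\partial u_{l}}{\partial t^{\beta}}$, which is exactly $\partial_{\alpha}\lambda\,\partial_{\beta}\lambda\big|_{p=P_{l}}$. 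Hence the numerator of $Q_{\alpha\beta}$ vanishes at $P_{l}$.

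Since the numerator has at least a simple zero at $P_{l}$ and $\partial_{p}\lambda$ has exactly a simple zero there, $Q_{\alpha\beta}$ extends holomorphically across $P_{l}$, so $\partial_{p}Q_{\alpha\beta}$ is holomorphic at $P_{l}$, and the whole expression (\ref{Bounded rhs of Adam id}) is holomorphic at every critical point of $\lambda$. The real content of the argument is concentrated in the identity of the previous paragraph: the observation that the values of the flat-coordinate derivatives of $\lambda$ at a critical point $P_{l}$ coincide with the Jacobian entries $\partial u_{l}/\partial t^{\alpha}$ of the change to canonical coordinates, which forces $\partial_{\alpha}\lambda\,\partial_{\beta}\lambda$ and $c_{\alpha\beta}^{\gamma}\partial_{\gamma}\lambda$ to agree at $P_{l}$. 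Everything else is routine vanishing-order bookkeeping.
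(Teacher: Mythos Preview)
Your proof is correct and follows essentially the same approach as the paper: both show that the numerator $\partial_{\alpha}\lambda\,\partial_{\beta}\lambda - c_{\alpha\beta}^{\gamma}\partial_{\gamma}\lambda$ vanishes at each critical point by using $\partial_{\alpha}\lambda|_{P_l}=\partial u_l/\partial t^{\alpha}$ together with the canonical-coordinate form of the structure constants, and then conclude holomorphicity from the simple (Morse) zero of $\partial_{p}\lambda$. Your write-up is in fact somewhat cleaner than the paper's, which carries out the same cancellation via a local expansion $\lambda(p)=u_i+\eta_{ii}(p-u_i)^2+O((p-u_i)^3)$.
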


\begin{proof}
Recall that any LG superpotential $\lambda$ has the following holomorphic behaviour near its critical points
\begin{equation}
\lambda(p)=u_i+\eta_{ii}(p-u_i)^2+O(p-u_i)^3).
\end{equation}
Therefore, the function $\partial_{\alpha}\partial_{\beta}\lambda(p)$ is holomorphic around $p=u_i$. The other piece of the function (\ref{Bounded rhs of Adam id}) has the following behaviour near $p=u_i$
 \begin{equation}\label{Bounded rhs of Adam id 10}
 \begin{split}
\frac{ \partial_{\alpha}\lambda \partial_{\beta}\lambda-c_{\alpha\beta}^{\gamma}\partial_{\gamma}\lambda}{ \partial_{p}\lambda}&=  \frac{ \partial_{\alpha}u_i \partial_{\beta}u_i-c_{\alpha\beta}^{\gamma}\partial_{\gamma}u_i}{ \eta^{-1}_{ii}(p-u_i)}+ O(1),\\
&=  \frac{ \partial_{\alpha}u_i \partial_{\beta}u_i-c_{jk}^{l}\partial_{\alpha}u_j\partial_{\beta}u_k\partial_{i}t^{\gamma}\partial_{\gamma}u_i}{ \eta^{-1}_{ii}(p-u_i)}+ O(1),\\
&=  \frac{ \partial_{\alpha}u_i \partial_{\beta}u_i-\partial_{\alpha}u_j\partial_{\beta}u_k\delta_i^l\delta_{j}^{l}\delta_{k}^l}{ \eta^{-1}_{ii}(p-u_i)}+ O(1),\\
&= O(1).\\
\end{split}
\end{equation}
Since in canonical coordinates the Frobenius multiplication has the following form $c_{ij}^k=\delta_i^k\delta_j^k$. Lemma proved.

\end{proof}

\begin{lemma}
Let M be a n-dimensional Dubrovin Frobenius manifold with flat coordinates $t^{\alpha}$  and $\lambda(p)$ its Landau-Ginzburg superpotential. Then,
 
 \begin{equation}\label{second derivative of Hodge variation integral version}
\int_{Z_j}\left[\partial_{\alpha}\partial_{\beta}\lambda-\frac{\partial}{\partial p}\left[ \frac{ \partial_{\alpha}\lambda \partial_{\beta}\lambda-c_{\alpha\beta}^{\gamma}\partial_{\gamma}\lambda}{ \partial_{p}\lambda}  \right]\right] dp=0,
\end{equation}

where $Z_j$ forms a basis for the homology of $\Lambda^{*}(z)$.
\end{lemma}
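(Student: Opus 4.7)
The plan is to exploit the integral representation of the Dubrovin connection flat coordinates afforded by the Landau--Ginzburg superpotential. By the LG prescription in Definition~\ref{Landau-Ginzburg superpotential definition}, these flat coordinates take the form
\begin{equation*}
\tilde t_j(z,t) = z^{-1/2} \int_{Z_j} e^{z\lambda(p,t)} \phi,
\end{equation*}
and the first line of (\ref{flat section Dubrovin connection}) shows that, as functions of the Saito flat coordinates $t^\alpha$, they satisfy the Dubrovin flatness equation $\partial_\alpha \partial_\beta \tilde t = z\, c_{\alpha\beta}^\gamma \partial_\gamma \tilde t$. Fixing the primary differential to be $\phi = dp$, the strategy is to translate this flatness relation into an integral identity for $\lambda$ and its $t$-derivatives and then to integrate by parts in the fibre variable $p$ to recognise precisely the integrand of (\ref{second derivative of Hodge variation integral version}).

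Concretely, differentiating $\tilde t_j$ twice in $t$ under the integral sign (the cycles being, locally in $t$, $t$-independent) and substituting into the flatness relation yields, after cancelling a common factor of $z^{1/2}$,
\begin{equation*}
\int_{Z_j} \partial_\alpha \partial_\beta \lambda \cdot e^{z\lambda} \, dp \;=\; z \int_{Z_j} \bigl(c_{\alpha\beta}^\gamma \partial_\gamma \lambda - \partial_\alpha \lambda \, \partial_\beta \lambda\bigr) e^{z\lambda} \, dp.
\end{equation*}
Using $z\,e^{z\lambda}\, dp = (\partial_p \lambda)^{-1}\, d(e^{z\lambda})$ to absorb the factor of $z$ into the differential, and then integrating by parts in $p$, the right-hand side becomes (up to boundary contributions)
\begin{equation*}
-\int_{Z_j} e^{z\lambda}\, \frac{\partial}{\partial p}\!\left[\frac{c_{\alpha\beta}^\gamma \partial_\gamma \lambda - \partial_\alpha \lambda \, \partial_\beta \lambda}{\partial_p \lambda}\right] dp.
\end{equation*}
Rearranging produces
\begin{equation*}
\int_{Z_j} e^{z\lambda} \left[\partial_\alpha \partial_\beta \lambda - \frac{\partial}{\partial p}\!\left[\frac{\partial_\alpha \lambda \, \partial_\beta \lambda - c_{\alpha\beta}^\gamma \partial_\gamma \lambda}{\partial_p \lambda}\right]\right] dp = 0
\end{equation*}
for all admissible $z$. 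Since the bracketed integrand does not depend on $z$, extracting the coefficient of $z^0$ (equivalently letting $z\to 0$) delivers the required identity~(\ref{second derivative of Hodge variation integral version}).

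The main technical obstacle is the vanishing of the boundary contributions in the integration by parts, together with the passage to $z = 0$. For the closed cycles in the standard basis---the small loops around each $\infty_i$ and the $a_i,b_i$ cycles on the genus component---no boundary terms appear; moreover, the preceding auxiliary lemma guarantees that the integrand of~(\ref{second derivative of Hodge variation integral version}) is holomorphic at every critical point of $\lambda$, so such a cycle may be deformed into the complement of $\{\infty_i\}$ and the limit $z \to 0$ taken pointwise. For the non-compact cycles responsible for the coordinates $p^i$ and $q^i$, whose endpoints tend to the punctures $\infty_i$, the boundary analysis requires the local expansions of $\lambda$, $\partial_\alpha \lambda$ and $\partial_p \lambda$ in the root $k_i$ of Theorem~\ref{flatcoord}, together with the exponential decay of $e^{z\lambda}$ in the chosen direction of $z$. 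These are precisely the ingredients built into the convergence of the integrals in~(\ref{Mirror symmetry}), and from them the vanishing of the boundary terms can be read off, completing the argument.
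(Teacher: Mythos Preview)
Your argument is essentially the same as the paper's: both substitute the oscillatory integral representation~(\ref{Mirror symmetry}) into the first flatness equation~(\ref{flat section Dubrovin connection}), differentiate under the integral, and integrate by parts using $z\,e^{z\lambda}dp=(\partial_p\lambda)^{-1}d(e^{z\lambda})$ to arrive at
\[
z^{1/2}\int_{Z_j}\Bigl[\partial_\alpha\partial_\beta\lambda-\tfrac{\partial}{\partial p}\Bigl(\tfrac{\partial_\alpha\lambda\,\partial_\beta\lambda-c_{\alpha\beta}^{\gamma}\partial_\gamma\lambda}{\partial_p\lambda}\Bigr)\Bigr]e^{z\lambda}\,dp=0.
\]
The paper in fact stops here (and only later, in the proof of Theorem~\ref{main theorem}, combines this with Liouville's theorem and $\int_{Z_j}e^{z\lambda}dp\neq 0$), whereas you go one step further and extract the $z^0$ coefficient; your final paragraph on non-compact cycles is unnecessary, since the relevant $Z_j$ (the $a$-, $b$-cycles and small loops $\gamma_k$ around the $\infty_i$, cf.\ Lemma~\ref{Auxiliar lemma 2 Laurent Taylor}) are all closed and carry no boundary terms.
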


\begin{proof}

The Dubrovin Connection flat section (\ref{flat section Dubrovin connection}) are given by

\begin{equation}\label{Dubrovin connection in flat coordinates}
\begin{split}
\partial_{\alpha}\partial_{\beta}\tilde t-zc_{\alpha\beta}^{\gamma}\partial_{\gamma}\tilde t=&0,\\
\partial_{z}\partial_{\beta}\tilde t+\frac{\mu}{z}\partial_{\beta}\tilde t=0.
\end{split}
\end{equation}
Substituting (\ref{Mirror symmetry}) in the first equation of  (\ref{Dubrovin connection in flat coordinates}), we obtain
\begin{equation}\label{Adam identity part 1}
\begin{split}
\partial_{\alpha}\partial_{\beta}\tilde t-zc_{\alpha\beta}^{\gamma}\partial_{\gamma}\tilde t=&\frac{1}{\sqrt{z}}\int_{Z_j} \left(z\partial_{\alpha}\partial_{\beta}\lambda +z^2\partial_{\alpha}\lambda\partial_{\beta}\lambda-z^2c_{\alpha\beta}^{\gamma}\partial_{\gamma}\lambda\right)e^{z\lambda(p)}dp,\\
&=z^{\frac{1}{2}}\int_{Z_j} \partial_{\alpha}\partial_{\beta}\lambda e^{z\lambda(p)}dp+z^{\frac{3}{2}}\int_{Z_j} \left(\partial_{\alpha}\lambda\partial_{\beta}\lambda-c_{\alpha\beta}^{\gamma}\partial_{\gamma}\lambda\right)e^{z\lambda(p)}dp,\\
&=z^{\frac{1}{2}}\int_{Z_j} \partial_{\alpha}\partial_{\beta}\lambda e^{z\lambda(p)}dp+z^{\frac{3}{2}}\int_{C_j} \left(\frac{\partial_{\alpha}\lambda\partial_{\beta}\lambda-c_{\alpha\beta}^{\gamma}\partial_{\gamma}\lambda}{d_p\lambda}\right)e^{z\lambda}d\lambda,\\
&=z^{\frac{1}{2}}\int_{Z_j} \partial_{\alpha}\partial_{\beta}\lambda e^{z\lambda(p)}dp-z^{\frac{1}{2}}\int_{Z_j} \frac{d}{dp}\left(\frac{\partial_{\alpha}\lambda\partial_{\beta}\lambda-c_{\alpha\beta}^{\gamma}\partial_{\gamma}\lambda}{d_p\lambda}\right)e^{z\lambda(p)}dp,\\
&=z^{\frac{1}{2}}\int_{Z_j}\left[ \partial_{\alpha}\partial_{\beta}\lambda - \frac{d}{dp}\left(\frac{\partial_{\alpha}\lambda\partial_{\beta}\lambda-c_{\alpha\beta}^{\gamma}\partial_{\gamma}\lambda}{d_p\lambda}\right)\right]e^{z\lambda(p)}dp,\\
&=0.
\end{split}
\end{equation}

Lemma proved.

\end{proof}

\begin{lemma}\label{Auxiliar lemma 2 Laurent Taylor}
 Let M be a n-dimensional Dubrovin Frobenius manifold with flat coordinates $t^{\alpha}$  and $\lambda(p)$ its Landau-Ginzburg superpotential. Then, the function
  \begin{equation}\label{Laurent expansion of Adam Id}
\partial_{\alpha}\partial_{\beta}\lambda-\frac{\partial}{\partial p}\left[ \frac{ \partial_{\alpha}\lambda \partial_{\beta}\lambda-c_{\alpha\beta}^{\gamma}\partial_{\gamma}\lambda}{ \partial_{p}\lambda}  \right]
\end{equation}
 is holomorphic at the pre image of $\infty$. ie.. $\lambda^{-1}(\infty)=\infty_0,\infty_1,..,\infty_m$.

\end{lemma}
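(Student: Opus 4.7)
The plan is to mimic the proof of Lemma \ref{Auxiliar lemma Generalized Adam Id 1}, but performed at each pole $\infty_i$ instead of at a critical point. Fix $i \in \{0,\ldots,m\}$, work in the distinguished local parameter $k=k_i$ with $k(\infty_i)=0$, and use the defining identity $\lambda=k^{-n_i-1}$. The goal is to show that the Laurent expansion of (\ref{Laurent expansion of Adam Id}) at $\infty_i$ has no polar part in $k$.

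First, I would write down the local Laurent expansions at $\infty_i$ of $\lambda$, $\partial_\alpha \lambda$, $\partial_p \lambda$ and $c_{\alpha\beta}^\gamma \partial_\gamma \lambda$. From $\lambda=k^{-n_i-1}$ the pole orders of $\lambda$ and of $\partial_p\lambda$ are immediate. Differentiating the same identity at fixed $p$ gives $\partial_\alpha \lambda|_p = -(n_i+1)k^{-n_i-2}\,\partial_\alpha k|_p$, and the flat-coordinate formula (\ref{flat coordinates of Saito metric as periods}) identifies the $t^{i,\alpha}$ with specific Laurent coefficients of $p$ in $k$, which in turn pins down the polar behavior of $\partial_\alpha k|_p$ and of $\partial_\alpha\partial_\beta \lambda|_p$. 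Analogous explicit expansions apply using the coordinate pairs $(p^j,q^j)$ attached to the other $\infty_j$ and $(r^j,s^j)$ attached to the $b$-cycles.

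With these expansions in hand, I would compare, coefficient by coefficient in $k$, the polar parts at $\infty_i$ of $\partial_\alpha\partial_\beta\lambda$ and of
\[
\frac{\partial}{\partial p}\left[\frac{\partial_\alpha\lambda\,\partial_\beta\lambda-c_{\alpha\beta}^{\gamma}\partial_\gamma\lambda}{\partial_p\lambda}\right],
\]
and verify that they agree. The algebraic input driving the cancellation is the residue formula (\ref{residue expression for eta, intersection form and structure constants}) for $c_{\alpha\beta\gamma}$, together with the passage to canonical coordinates used in Lemma \ref{Auxiliar lemma Generalized Adam Id 1}: the combination $\partial_\alpha\lambda\,\partial_\beta\lambda-c_{\alpha\beta}^{\gamma}\partial_\gamma\lambda$ is engineered to vanish at the critical points, and the same algebraic mechanism, transported to the preimage of $\infty$, is what makes the $k^{-m}$-coefficients on both sides match.

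The hardest step is precisely this cancellation at $\infty_i$: at a critical point the Frobenius product is diagonal in canonical coordinates so the match is automatic, while at $\infty_i$ one must combine the flat-coordinate identification (\ref{flat coordinates of Saito metric as periods}) with the residue definition (\ref{residue expression for eta, intersection form and structure constants}) of $c_{\alpha\beta\gamma}$ to organize the cancellation carefully. A cleaner alternative, which I would pursue in parallel, is to combine Lemma \ref{Auxiliar lemma Generalized Adam Id 1} with the vanishing statement (\ref{second derivative of Hodge variation integral version}) together with its analogues obtained by expanding $e^{z\lambda}$ in powers of $z$ inside the proof of that lemma; these give $\int_{Z_j}\lambda^k\,[\,\cdot\,]\,dp=0$ for every $k\ge 0$ and every cycle $Z_j$, and a standard residue-theorem argument on the compact surface $C_g$ then forces the polar Laurent coefficients at every $\infty_i$ to vanish simultaneously.
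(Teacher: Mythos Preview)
Your first strategy---local Laurent analysis at each $\infty_i$ in the parameter $k$---is the framework the paper adopts, but the paper supplies two concrete devices that you leave unspecified. For the structure-constant piece it invokes the Thermodynamic identity $\partial_\alpha\lambda|_p = -(\partial_\alpha p|_\lambda)(\partial_p\lambda)$, which rewrites $c_{\alpha\beta}^\gamma\partial_\gamma\lambda/\partial_p\lambda$ as $-c_{\alpha\beta}^\gamma\,\partial_\gamma p$; since $p$ has the explicit finite expansion (\ref{multivalued variable in the Hurwitz space}) in $k$, this term is manifestly holomorphic after applying $\partial/\partial p$. For the remaining piece $\partial_\alpha\partial_\beta\lambda - \partial_p\bigl(\partial_\alpha\lambda\,\partial_\beta\lambda/\partial_p\lambda\bigr)$, the paper differentiates the residue formula $t^\gamma=\mathrm{res}_{\infty_i}\lambda^{(n_i+1-\gamma)/(n_i+1)}\,dp$ (the form (\ref{flat coordinates of Saito metric as periods n root}) of (\ref{flat coordinates of Saito metric as periods})) twice in $t$; after one integration by parts this yields $\mathrm{res}_{\infty_i}\bigl[\,\cdot\,\bigr]\lambda^{-\gamma/(n_i+1)}\,dp=0$ for every $\gamma$, which kills all polar Laurent coefficients in $k$. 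This double differentiation of the flat-coordinate formula is the mechanism that replaces the canonical-coordinate diagonality you correctly flag as unavailable at $\infty_i$; the residue definition (\ref{residue expression for eta, intersection form and structure constants}) of $c_{\alpha\beta\gamma}$ is not what drives the cancellation here.

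Your alternative---expanding $e^{z\lambda}$ in powers of $z$ inside (\ref{Adam identity part 1}) to extract $\int_{Z_j}\lambda^k[\cdot]\,dp=0$ for all $k$---is close in spirit but needs care: over Lefschetz-type contours the integrals (\ref{Mirror symmetry}) are typically only asymptotic in $z$, and over the small loops $\gamma_k$ around $\infty_k$ the pole of $\lambda$ makes a naive Taylor expansion of $e^{z\lambda}$ inside the integral invalid. The paper uses the cycle integrals (\ref{Adam identity part 12}) only to secure single-valuedness across $a$- and $b$-periods and the absence of \emph{simple} poles at each $\infty_k$, and then does the direct local computation above for all higher orders; that split is the part of the argument your plan does not anticipate.
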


\begin{proof}
From proposition 3 of \cite{V. Shramchenko 1}, we have that a- and b-cycles on the underlying  Riemann surface $\Sigma_g$ and contours $\gamma_k$ encircling the poles $\infty_0,..\infty_m$ are elements of a suitable basis for the homology of $\Lambda^{*}(z)$. Therefore,

\begin{equation}\label{Adam identity part 12}
\begin{split}
\int_{\alpha_j}\left[ \partial_{\alpha}\partial_{\beta}\lambda - \frac{d}{dp}\left(\frac{\partial_{\alpha}\lambda\partial_{\beta}\lambda-c_{\alpha\beta}^{\gamma}\partial_{\gamma}\lambda}{d_p\lambda}\right)\right]dp=0,\quad j=1,,.g,\\
\int_{\beta_j}\left[ \partial_{\alpha}\partial_{\beta}\lambda - \frac{d}{dp}\left(\frac{\partial_{\alpha}\lambda\partial_{\beta}\lambda-c_{\alpha\beta}^{\gamma}\partial_{\gamma}\lambda}{d_p\lambda}\right)\right]dp=0,\quad j=1,,.g,\\
\int_{\gamma_k}\left[ \partial_{\alpha}\partial_{\beta}\lambda - \frac{d}{dp}\left(\frac{\partial_{\alpha}\lambda\partial_{\beta}\lambda-c_{\alpha\beta}^{\gamma}\partial_{\gamma}\lambda}{d_p\lambda}\right)\right]dp=0.\quad k=0,1,..,m.
\end{split}
\end{equation}
As a consequence, the function  (\ref{Laurent expansion of Adam Id}) is invariant $\alpha$ and $\beta$ cycles and it does not have simple poles around $\infty_0,..\infty_m$. In order to proof that   (\ref{Laurent expansion of Adam Id})  is holomorphic around $\infty_0,..\infty_m$ it is enough to check its higher order poles.\\

Consider the local change of  coordinate around some generic pole
\begin{equation}\label{multivalued variable in the Hurwitz space}
p(k)=\frac{1}{n+1}\sum_{\alpha=0}^n t^{n+1-\alpha}k^{\alpha+1} +O(k^{n+2}), \quad \text{where} \quad \lambda=\frac{1}{k^{n+1}}.
\end{equation}
Then,
\begin{equation}
\frac{\partial p}{\partial t^{\beta}}=\frac{k^{n+1-\beta}}{n+1}  +O(k^{n+2})
\end{equation}

Recall the Thermodynamic identity 
\begin{equation}
\frac{\partial \lambda}{\partial t^{\alpha}}=-\frac{\partial p}{\partial t^{\alpha}}\frac{\partial \lambda}{\partial p}.
\end{equation}
Then,

\begin{equation}\label{part structure constant}
\frac{c_{\alpha\beta}^{\gamma}\frac{\partial \lambda}{\partial t^{\gamma}}}{\frac{\partial \lambda}{\partial p}}=-c_{\alpha\beta}^{\gamma}\frac{\partial p}{\partial t^{\gamma}}
\end{equation}
Substituting (\ref{part structure constant}) in 
\begin{equation}\label{part structure constant 2}
\frac{\partial}{\partial p}\left(\frac{c_{\alpha\beta}^{\gamma}\frac{\partial \lambda}{\partial t^{\gamma}}}{\frac{\partial \lambda}{\partial p}} \right)dp,
\end{equation}
we obtain,
\begin{equation}\label{part structure constant 3}
\begin{split}
\frac{\partial}{\partial p}\left(\frac{c_{\alpha\beta}^{\gamma}\frac{\partial \lambda}{\partial t^{\gamma}}}{\frac{\partial \lambda}{\partial p}} \right)dp&=\frac{\partial}{\partial p}\left( -c_{\alpha\beta}^{\gamma}\frac{\partial p}{\partial t^{\gamma}}
 \right)dp,\\
 &=\frac{\partial k}{\partial p}\frac{\partial}{\partial k}\left( -c_{\alpha\beta}^{\gamma}\frac{\partial p}{\partial t^{\gamma}}
 \right)\frac{\partial p}{\partial k}dk,\\
  &=\frac{\partial}{\partial k}\left( -c_{\alpha\beta}^{\gamma}\frac{\partial p}{\partial t^{\gamma}}
 \right)dk,\\
   &=\frac{\partial}{\partial k}\left( -\frac{c_{\alpha\beta}^{\gamma}}{n+1} k^{n+1-\gamma} +O(k^{n+2})
 \right)dk,\\
  &=\left(-\frac{\left(n+1-\gamma\right)}{n+1}c_{\alpha\beta}^{\gamma} k^{n-\gamma} +O(k^{n+2}) \right)dk.\\
 \end{split}
\end{equation}
Therefore, the function (\ref{part structure constant 2}) is holomorphic in $k$.

Recall the formula

\begin{equation}\label{flat coordinates of Saito metric as periods n root}
t^{\gamma}=\res_{p=\infty} \lambda(p)^{\frac{n+1-\gamma}{n+1}} dp,
\end{equation}
which is equivalent to the equation (\ref{flat coordinates of Saito metric as periods}).
Differentiating (\ref{flat coordinates of Saito metric as periods n root}) we obtain,

\begin{equation}
\begin{split}
0=\frac{\partial ^2t^{\gamma}}{\partial t^{\alpha}\partial t^{\beta}}&=\res_{p=\infty} \frac{n+1-\gamma}{n+1}\left[ \frac{\partial ^2\lambda}{\partial t^{\alpha}\partial t^{\beta}} \lambda(p)^{\frac{-\gamma}{n+1}}-\frac{\gamma}{n+1}\frac{\partial \lambda}{\partial t^{\alpha}}\frac{\partial \lambda}{\partial t^{\beta}} \lambda(p)^{\frac{-n-1-\gamma}{n+1}}\right] dp,\\
&=\res_{p=\infty} \frac{n+1-\gamma}{n+1}\left[ \frac{\partial ^2\lambda}{\partial t^{\alpha}\partial t^{\beta}} \lambda(p)^{\frac{-\gamma}{n+1}}dp \right]-\res_{p=\infty} \frac{n+1-\gamma}{n+1}\left[ \frac{\gamma}{n+1}\frac{\frac{\partial \lambda}{\partial t^{\alpha}}\frac{\partial \lambda}{\partial t^{\beta}} }{\frac{\partial \lambda}{\partial p}}\lambda^{\frac{-n-1-\gamma}{n+1}}\right] d\lambda,\\
&=\res_{p=\infty} \frac{n+1-\gamma}{n+1}\left[ \frac{\partial ^2\lambda}{\partial t^{\alpha}\partial t^{\beta}} \lambda(p)^{\frac{-\gamma}{n+1}}dp \right]-\res_{p=\infty} \frac{n+1-\gamma}{n+1}\left[ \frac{\partial}{\partial p}\left[\frac{\frac{\partial \lambda}{\partial t^{\alpha}}\frac{\partial \lambda}{\partial t^{\beta}} }{\frac{\partial \lambda}{\partial p}}\right]\lambda^{\frac{-\gamma}{n+1}}(p)\right] dp,\\
&=\res_{p=\infty} \frac{n+1-\gamma}{n+1}\left[ \frac{\partial ^2\lambda}{\partial t^{\alpha}\partial t^{\beta}} - \frac{\partial}{\partial p}\left[\frac{\frac{\partial \lambda}{\partial t^{\alpha}}\frac{\partial \lambda}{\partial t^{\beta}} }{\frac{\partial \lambda}{\partial p}}\right]\right]\lambda(p)^{\frac{-\gamma}{n+1}} dp,\\
\end{split}
\end{equation}
which implies that the function 
\begin{equation*}
\frac{\partial ^2\lambda}{\partial t^{\alpha}\partial t^{\beta}} - \frac{\partial}{\partial p}\left[\frac{\frac{\partial \lambda}{\partial t^{\alpha}}\frac{\partial \lambda}{\partial t^{\beta}} }{\frac{\partial \lambda}{\partial p}}\right]
\end{equation*}
is holomorphic in $k$ and since the (\ref{part structure constant 2}) is holomorphic in $k$, we have that (\ref{Laurent expansion of Adam Id}) is holomorphic in k. Lemma proved.

\end{proof}

At this stage, we can prove the main result of this paper which is given as follows:

\begin{theorem}\label{main theorem}
 Let M be a n-dimensional Dubrovin Frobenius manifold with flat coordinates $t^{\alpha}$  and $\lambda(p)$ its Landau-Ginzburg superpotential. Then,
 \begin{equation}\label{second derivative of Hodge variation main theorem}
\partial_{\alpha}\partial_{\beta}\lambda=\frac{\partial}{\partial p}\left[ \frac{ \partial_{\alpha}\lambda \partial_{\beta}\lambda-c_{\alpha\beta}^{\gamma}\partial_{\gamma}\lambda}{ \partial_{p}\lambda}  \right].
\end{equation}
\end{theorem}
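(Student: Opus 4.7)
The plan is to combine the three lemmas already established in this section, viewing the expression in question as a meromorphic function on the underlying Riemann surface $C_g$. Let me denote
$$f := \partial_{\alpha}\partial_{\beta}\lambda-\frac{\partial}{\partial p}\left[ \frac{ \partial_{\alpha}\lambda \partial_{\beta}\lambda-c_{\alpha\beta}^{\gamma}\partial_{\gamma}\lambda}{ \partial_{p}\lambda}  \right],$$
so that the theorem amounts to showing $f \equiv 0$ identically in $p \in C_g$.

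First I would locate the a priori singular set of $f$ on $C_g$. The only mechanisms by which $f$ could fail to be holomorphic are a vanishing denominator $\partial_p \lambda = 0$, which happens precisely at the critical points $P_1, \ldots, P_n$ of $\lambda$, and poles of $\lambda$ itself, located at the preimages $\infty_0, \ldots, \infty_m$ of $\infty \in \mathbb{CP}^1$. Lemma \ref{Auxiliar lemma Generalized Adam Id 1} certifies that $f$ is holomorphic at the critical points (the apparent simple pole cancels by the semisimple identity $c_{jk}^{l} = \delta^{k}_j \delta_j^{l}$ in canonical coordinates), while Lemma \ref{Auxiliar lemma 2 Laurent Taylor} certifies holomorphicity at each $\infty_i$. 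Combined, $f$ extends to a holomorphic function on the compact Riemann surface $C_g$.

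Next I would invoke compactness: any holomorphic function on a compact Riemann surface is constant, so $f \equiv c$ for some $c \in \mathbb{C}$. It then only remains to pin down this constant. For that I would apply the period-vanishing lemma (\ref{second derivative of Hodge variation integral version}), which asserts $\int_{Z_j} f \, dp = 0$ for every cycle $Z_j$ of a basis of the relevant homology of $\Lambda^{*}(z)$. Specializing to $f \equiv c$ yields $c \int_{Z_j} dp = 0$ for all such $Z_j$, and since at least one of the period integrals $\int_{Z_j} dp = \int_{Z_j} \phi$ is nonzero (indeed such periods furnish the flat coordinates $p^i, r^i$ of Theorem \ref{flatcoord}), we conclude $c = 0$ and therefore (\ref{second derivative of Hodge variation main theorem}).

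The conceptual engine driving the argument is the rigidity of holomorphic functions on compact Riemann surfaces; the technical engine is already in place in the three preparatory lemmas. The main obstacle was really the holomorphicity at the punctures $\infty_i$ handled in Lemma \ref{Auxiliar lemma 2 Laurent Taylor}, which required the Laurent expansion (\ref{multivalued variable in the Hurwitz space}), the thermodynamic identity, and differentiating the residue formula (\ref{flat coordinates of Saito metric as periods n root}) for the Saito flat coordinates. With those ingredients assembled, the final theorem follows from a short sheaf-of-lemmas synthesis as outlined above.
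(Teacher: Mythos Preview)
Your strategy is exactly the paper's: combine Lemma~\ref{Auxiliar lemma Generalized Adam Id 1} and Lemma~\ref{Auxiliar lemma 2 Laurent Taylor} to see that $f$ is globally holomorphic on the compact curve $C_g$, invoke Liouville to get $f\equiv c$, and then use the integral lemma to force $c=0$. The only divergence is in how you kill the constant. The paper does not use the bare statement~(\ref{second derivative of Hodge variation integral version}); it uses the stronger identity~(\ref{Adam identity part 1}) established inside that lemma's proof, namely $\int_{Z_j} f\, e^{z\lambda}\,dp=0$, together with $\int_{Z_j} e^{z\lambda}\,dp\neq 0$ (these are, up to normalization, the Dubrovin-connection flat coordinates $\tilde t_j$ of~(\ref{Mirror symmetry}), hence nonzero). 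You instead argue $c\int_{Z_j}dp=0$ and appeal to the Saito coordinates $p^i,r^i$ to produce a nonvanishing period of $dp$. That works when $g\geq 1$ (via $r^i=\int_{b_i}dp$), but for $H_{0,n}$ one has $g=m=0$: there are no $p^i$ or $r^i$, the only closed cycle among the $Z_j$ is $\gamma_0$ around $\infty$ with $\int_{\gamma_0}dp=\operatorname{res}_\infty dp=0$, and the remaining $Z_j$ are noncompact thimbles on which $\int dp$ is not even finite. So your justification of $\int_{Z_j}dp\neq 0$ does not cover all Hurwitz spaces. The fix is exactly the paper's: keep the exponential weight from~(\ref{Adam identity part 1}) and use that the oscillatory integrals themselves are nonzero, which is part of the definition of a Landau--Ginzburg superpotential.
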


\begin{proof}

From Lemma \ref{Auxiliar lemma Generalized Adam Id 1} and Lemma \ref{Auxiliar lemma 2 Laurent Taylor}, we have that the function (\ref{Bounded rhs of Adam id}) is a holomorphic bounded function on a compact Riemann surfaces. Then, it is constant with respect the variable $p$ by Liouville's theorem. Using equation (\ref{Adam identity part 1}), we have 
\begin{equation}
\partial_{\alpha}\partial_{\beta}\lambda - \frac{d}{dp}\left(\frac{\partial_{\alpha}\lambda\partial_{\beta}\lambda-c_{\alpha\beta}^{\gamma}\partial_{\gamma}\lambda}{d_p\lambda}\right)=0.
\end{equation}
Because $\int_{Z_j}e^{z\lambda(p)}dp\neq 0$. Theorem proved.

\end{proof}

\begin{corollary}\label{main corollary}
 Let M be a n-dimensional Dubrovin Frobenius manifold with flat coordinates $t^{\alpha}$  and $\lambda(p)$ its Landau-Ginzburg superpotential. Then,
 \begin{equation}\label{second derivative of Hodge variation corrolary}
\partial_{\alpha}\partial_{\beta}\left(\int\lambda(p)dp\right)= \frac{ \partial_{\alpha}\lambda \partial_{\beta}\lambda-c_{\alpha\beta}^{\gamma}\partial_{\gamma}\lambda}{ \partial_{p}\lambda} .
\end{equation}
\end{corollary}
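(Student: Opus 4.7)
The plan is to obtain the corollary directly as the $p$-antiderivative of Theorem \ref{main theorem}. First I would fix a local primitive $\Omega(p,t^1,\dots,t^n)$ of the superpotential in the $p$-variable, that is $\partial_p\Omega = \lambda$, and denote it by $\int\lambda(p)\,dp$. Such a primitive exists by direct integration in $p$ and is determined up to an additive function of the flat coordinates alone.

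Next I would differentiate the claimed identity (\ref{second derivative of Hodge variation corrolary}) once more in $p$. On the left hand side, commuting $\partial_p$ past $\partial_\alpha\partial_\beta$ and using $\partial_p\Omega = \lambda$ produces $\partial_\alpha\partial_\beta\lambda$. On the right hand side this gives exactly the bracketed expression appearing on the right of (\ref{second derivative of Hodge variation main theorem}). Theorem \ref{main theorem} then identifies these two expressions, so the two sides of (\ref{second derivative of Hodge variation corrolary}) must have the same $p$-derivative, and therefore agree up to a function of $t^1,\dots,t^n$ alone.

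The only remaining point, which I do not view as a genuine obstacle, is to eliminate this $t$-dependent integration constant by suitably choosing the primitive $\Omega$; this is the content of reading the equality in (\ref{second derivative of Hodge variation corrolary}) modulo the ambiguity in $\int\lambda\,dp$. The analytic heart of the proof was already spent inside Theorem \ref{main theorem}, where Liouville's theorem was applied on the compact Riemann surface underlying the Hurwitz cover to the function (\ref{Bounded rhs of Adam id}), using the boundedness near the critical points (Lemma \ref{Auxiliar lemma Generalized Adam Id 1}), the vanishing of all cycle periods (equation (\ref{second derivative of Hodge variation integral version})), and the holomorphicity at the preimages of $\infty$ (Lemma \ref{Auxiliar lemma 2 Laurent Taylor}). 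Once (\ref{second derivative of Hodge variation corrolary}) is secured, Alcolado's criterion (Theorem \ref{Adam identity 10}) is directly applicable and yields the full statement of Theorem \ref{Main theorem Almeida}.
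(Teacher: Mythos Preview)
Your proposal is correct and follows essentially the same route as the paper: both arguments integrate the identity of Theorem \ref{main theorem} with respect to $p$, observe that the two sides of (\ref{second derivative of Hodge variation corrolary}) then differ by a function $\tilde\Omega_{\alpha\beta}(t)$ independent of $p$, and absorb this discrepancy into the choice of primitive $\int\lambda\,dp$. The paper's proof is extremely terse (and in fact the Appendix is devoted to computing these integration constants explicitly in the examples), so your write-up is if anything more complete than the original.
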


\begin{proof}
Consider the integration of constant $\tilde\Omega_{\alpha\beta}(t)$ of $\int \lambda(p) dp$ such that
\begin{equation*}
\int \lambda(p) dp+ \tilde\Omega_{\alpha\beta}(t)= \frac{ \partial_{\alpha}\lambda \partial_{\beta}\lambda-c_{\alpha\beta}^{\gamma}\partial_{\gamma}\lambda}{ \partial_{p}\lambda}.
\end{equation*}

Corollary proved.

\end{proof}

\begin{remark}\label{remark triviality of t1}
It is straightforward to see that the identity (\ref{second derivative of Hodge variation corrolary}) holds true for $\alpha=1$ since 
\begin{equation*}
\frac{\partial \lambda}{\partial t^1}=0.
\end{equation*}
Indeed, on one hand
 \begin{equation*}
 \begin{split}
\partial_{1}\partial_{\beta}\left(\int\lambda(p)dp\right)&=\left(\int \partial_{1}\partial_{\beta}\lambda(p)dp\right)=0.
\end{split}
\end{equation*}
On another hand,
 \begin{equation*}
 \begin{split}
\frac{ \partial_{1}\lambda \partial_{\beta}\lambda-c_{1\beta}^{\gamma}\partial_{\gamma}\lambda}{ \partial_{p}\lambda} =\frac{  \partial_{\beta}\lambda-\delta_{\beta}^{\gamma}\partial_{\gamma}\lambda}{ \partial_{p}\lambda} =0.
\end{split}
\end{equation*}

In particular, we have that $\Omega(p,t)=\int \lambda(p)dp$ has property
\begin{equation}\label{Buryak  additional condition}
\frac{\partial^2\Omega}{\partial t^1\partial p}=1, \quad \frac{\partial^2\Omega}{\partial t^1\partial t^{\beta}}=0.
\end{equation}
Note that the condition (\ref{Buryak  additional condition}) is the same condition imposed in the equation (1.8) of \cite{Buryak}.

\end{remark}

\begin{remark}

From corollary \ref{main corollary} and results of the section \ref{Flat F manifold and Dubrovin Frobenius manifold}, we extend any Hurwitz Dubrovin Frobenius manifold $H_{g,n_0,..,n_m}\ni (t^1,t^2,..,t^n)$ to a Flat F manifold $H_{g,n_0,..,n_m}\times C_g \ni (t^1,t^2,..,t^n,p)$ which we denoted by Open Hurwitz Flat F manifold $H_{g,n_0,..,n_m}^o$.

\end{remark}

\section{Examples}

In this section, we provide some low dimensional examples of Open Hurwitz Flat F manifold $H_{g,n_0,..,n_m}^o$. The algorithm to compute the Open WDVV equations are the following: Starting from  the Hurwitz space 
\begin{equation*}
H_{g,n_0,n_1,..,n_m}:=\{ \lambda: C_g\mapsto \mathbb{CP}^1\}\quad  and \quad \phi=dp,
\end{equation*}
 we compute the associated WDVV solution $F(t)$ by using the last  formula of (\ref{residue expression for eta, intersection form and structure constants}) and integrating it. From $F(t)$ and $\lambda(p)$, we compute the open component $\Omega(p,t)$ of the Open WDVV solution  $(F(t),\Omega(p,t))$ by the formula (\ref{second derivative of Hodge variation corrolary}).\\

\subsection{Open  $H_{0,1}$ and $H_{0,2}$}

From the data of the Hurwitz spaces $H_{0,1}$  and $H_{0,2}$ with the holomorphic differential $\phi=dp$ 

\begin{equation*}
\begin{split}
\lambda(p,t^1)=p^2+t^1,\quad \phi=dp,\quad \lambda(p,t^1,t^2)=p^3+t^2p+t^1,\quad \phi=dp.\\
\end{split}
\end{equation*}

We can derive the  Open WDVV solutions associated with the Hurwitz space  $H_{0,1}$  and $H_{0,2}$ with the holomorphic differential $\phi=dp$
\begin{equation}\label{open WDVVlg h01}
\begin{split}
F(t^1)&=\frac{(t^1)^3}{6},\\
\Omega(p,t^1)&=\frac{p^3}{3}+t^1p.
\end{split}
\end{equation}

\begin{equation}\label{Open WDVVlg h02}
\begin{split}
F(t^1,t^2)&=\frac{(t^1)^2t^2}{2}-\frac{(t^2)^4}{72},\\
\Omega(p,t^1,t^2)&=\frac{p^4}{4}+t^2\frac{p^2}{2}+t^1p+\frac{(t^2)^2}{6}.
\end{split}
\end{equation}

The integration constants of (\ref{open WDVVlg h01}) are trivial due to the remark \ref{remark triviality of t1} and the term $\frac{(t^2)^2}{6}$ in the Open WDVV solution (\ref{Open WDVVlg h02}) appears from the difference 
\begin{equation*}
\begin{split}
\frac{ \partial}{\partial^2 t^2}\left( \frac{p^4}{4}+t^2\frac{p^2}{2}+t^1p \right)-\frac{ \partial_{2}\lambda \partial_{2}\lambda-c_{22}^{\gamma}\partial_{\gamma}\lambda}{ \partial_{p}\lambda}.
\end{split}
\end{equation*}
Note that formulas (\ref{open WDVVlg h01}) and (\ref{Open WDVVlg h02}) are already known from Alcolado result's \cite{Alcolado Adam}. In the next sections, we will use the corollary \ref{main corollary} to produce new Open WDDV solutions.

\subsection{Open Quantum Cohomology}

Consider  the Hurwitz space $H_{0,0,0}$  and $H_{0,1,0}$ 

\begin{equation}\label{pre trigonometric functions}
\begin{split}
\lambda(p,a,b)=p+\frac{a}{p-b}, \quad \lambda(p,a,b,c)=\frac{p^2}{2}+c+\frac{a}{p-b},
\end{split}
\end{equation}
with meromorphic differential $\frac{dp}{p-b}$. Then, using the local change of coordinates in $p$ induced by the meromorphic differential $\frac{dp}{p-b}$. i.e
\begin{equation*}
\begin{split}
d\tilde p=\frac{dp}{p-b} \implies \tilde p=\log (p-b).
\end{split}
\end{equation*}
We rewrite (\ref{pre trigonometric functions}), after some suitable change of coordinates in $(a,b,c)$ as 
\begin{equation}\label{trigonometric functions 1}
\begin{split}
\lambda(p,t^1,t^2)&=t^1-2e^{\frac{t^2}{2}}\cos p,\quad \phi=dp,
 \end{split}
\end{equation}
\begin{equation}\label{trigonometric functions 2}
\begin{split}
 \lambda(p,t^1,t^2,t^3)&=e^{2p}+\sqrt{2}t^2e^p+t^1+\frac{e^{t^3-p}}{\sqrt{2}},\quad \phi=dp.
 \end{split}
\end{equation}
The open WDVV solutions associated with the two family of trigonometric functions (\ref{trigonometric functions 1}) and (\ref{trigonometric functions 2}) are given by

\begin{equation}\label{Open Quantum cohomology 1}
\begin{split}
F(t^1,t^2)&=\frac{(t^1)^2t^2}{2}+e^{t^2},\\
\Omega(p,t^1,t^2)&=t^1p-2e^{\frac{t^2}{2}}\sin p
\end{split}
\end{equation}

\begin{equation}\label{Open Quantum cohomology 2}
\begin{split}
F(t^1,t^2,t^3)&=\frac{(t^1)^2t^2}{2}+\frac{(t^2)^2t^1}{2}-\frac{1}{24}(t^2)^4+t^2e^{t^3},\\
\Omega(p,t^1,t^2,t^3)&=\frac{e^{2p}}{2}+\sqrt{2}t^2e^p+t^1p-\frac{e^{t^3-p}}{\sqrt{2}}+\frac{(t^2)^2}{2},\\
\end{split}
\end{equation}
respectively. The equation (\ref{Open Quantum cohomology 1}) have been already obtained in the section  6.3 of \cite{Buryak} by solving the Open WDVV equations and the Quasi-Homogeneous equations explicitly. Moreover, from the best of my knowledge the solution (\ref{Open Quantum cohomology 2}) is new and since Dubrovin Frobenius manifolds associated with Extended affine $A_n$ has a Landau Ginzburg superpotential associated with a Hurwitz space, one could use the corollary \ref{main corollary} to compute the respective Open WDVV solution by just integrating the Landau Ginzburg superpotential  and checking the integration constants. The integration constants of (\ref{Open Quantum cohomology 1}) and (\ref{Open Quantum cohomology 2})  are computed in the Appendix \ref{Appendix}. 
\\

\subsection{Open genus 1 Hurwitz spaces}
From the data of the Hurwitz spaces $H_{1,1}$   

\begin{equation*}
\begin{split}
\lambda(p,t^1,t^2,t^3)&=t^1+(t^2)^2\left[\frac{\partial ^2\log\theta_1(p,t^3)}{\partial ^2p}\right],\\
\end{split}
\end{equation*}
with the holomorphic differential $\phi=dp$. Here $\theta_1(p,\tau)$ is the Jacobi theta 1 function. i.e.

\begin{equation*}
\theta_1(p,\tau)=2\sum_{n=0}^{\infty} (-1)^{n}{\rm e}^{\pi {\rm i}\tau(n+\frac{1}{2})^2 }\sin((2n+1)p).
\end{equation*}

We can derive the  Open WDVV solutions associated with the Hurwitz space  $H_{1,1}$ with the holomorphic differential $\phi=dp$, which is given by

 \begin{equation}\label{Open H11}
 \begin{split}
F\left(t^1,t^2,t^3\right)&=\frac{\left(t^1\right)^2t^3}{2}+\frac{t^1(t^2)^2}{2}-\frac{i\pi}{48}E_2(t^3),\\
\Omega(p,t^1,t^2,t^3)&=t^1p+(t^2)^2\left[\frac{\partial \log\theta_1(p,t^3)}{\partial p}\right].\\
\end{split}
\end{equation}
Here $E_2(\tau)$ is the Eisenstein 2 series. i.e.

\begin{equation*}
E_2(\tau)=1+\frac{3}{\pi^2}\sum_{m\neq 0}\sum_{n=-\infty}^{\infty}\frac{1}{(m+n\tau)^2}.
\end{equation*}

%
%

The integration constants of (\ref{Open H11})  are computed in the Appendix \ref{Appendix}. 

\section{Conclusion}

The results of this paper provide an efficient method for computing Open WDVV solutions associated with any Hurwitz space. Moreover, these findings suggest potential connections with the Topological Recursion framework \cite{Eynard}, as the data of the Open WDVV solution depends only on the Landau–Ginzburg superpotential and the choice of Abelian differential $\phi$. This connection may also be relevant for constructing open descendant potentials \cite{Alexander}, \cite{Buryak}.

Notably, all the examples provided in this manuscript are related to reflection groups and their extensions \cite{Almeida 1}, \cite{Almeida 2}, \cite{Bertola}, \cite{Bertola 2}, \cite{B. Dubrovin1}, \cite{B. Dubrovin34}, \cite{Zuo}. Therefore, further exploration of the relationship between the monodromy of the Open Hurwitz Flat F-manifold and reflection groups, along with their extensions, could be a promising direction for future research.

\section{Appendix}\label{Appendix}

\subsection{$H_{0,0,0}$ and $H_{0,1,0}$ case }

To compute the integration constants of (\ref{Open Quantum cohomology 1}) is enough to find  $\tilde\Omega(t^2) $ such that 
 \begin{equation*}
\partial^2_{2}\left(t^1p-2e^{\frac{t^2}{2}}\sin p+\tilde\Omega(t^2)   \right)= \frac{ (\partial_2\lambda)^2 -c_{22}^{\gamma}\partial_{\gamma}\lambda}{ \partial_{p}\lambda},
\end{equation*}

which is equal to 


 \begin{equation}
 \begin{split}
-\frac{1}{2}e^{\frac{t^2}{2}}\sin p+\partial^2_{2}\tilde\Omega(t^2)&= \frac{ (\partial_2\lambda)^2 -c_{22}^{\gamma}\partial_{\gamma}\lambda}{ \partial_{p}\lambda},\\
&=\frac{ e^{t^2}\cos^2p -e^{t^2}}{2e^{\frac{t^2}{2}}\sin p }   \\
&=  -\frac{1}{2}e^{\frac{t^2}{2}}\sin p . \\
\end{split}
\end{equation}
Therefore, $\tilde\Omega(t^2)=0$ up linear terms.

To compute the integration constants of (\ref{Open Quantum cohomology 2}) is enough to find  $\tilde\Omega(t^2,t^3) $ such that 

 \begin{equation*}
\partial_{\alpha}\partial_{\beta}\left(\frac{e^{2p}}{2}+\sqrt{2}t^2e^p+t^1p-\frac{e^{t^3-p}}{\sqrt{2}}+\tilde\Omega(t^2,t^3)   \right)= \frac{ \partial_{\alpha}\lambda \partial_{\beta}\lambda -c_{\alpha\beta}^{\gamma}\partial_{\gamma}\lambda}{ \partial_{p}\lambda},
\end{equation*}
for $(\alpha,\beta)=(2,2),(2,3),(3,3)$.


Then,

 \begin{equation}
 \begin{split}
\partial_2\partial_2\tilde\Omega(t^2,t^3)&= \frac{ 2e^{2p} -c_{223}-c_{222}\frac{\partial \lambda}{\partial t^2}-\frac{\partial \lambda}{\partial t^3}}{ 2e^{2p}+\sqrt{2}t^2e^p-\frac{e^{t^3-p}}{\sqrt{2}} },\\
&= \frac{ 2e^{2p} +t^2\sqrt{2}e^p-\frac{e^{t^3-p}}{\sqrt{2}}}{ 2e^{2p}+\sqrt{2}t^2e^p-\frac{e^{t^3-p}}{\sqrt{2}} },\\
&=1,\\
\end{split}
\end{equation}

 \begin{equation}
 \begin{split}
\partial_2\partial_3\tilde\Omega(t^2,t^3)&= \frac{ e^{t^3} -c_{233}-c_{223}\frac{\partial \lambda}{\partial t^2}}{ 2e^{2p}+\sqrt{2}t^2e^p-\frac{e^{t^3-p}}{\sqrt{2}} },\\
&= \frac{ e^{t^3}-e^{t^3}}{ 2e^{2p}+\sqrt{2}t^2e^p-\frac{e^{t^3-p}}{\sqrt{2}} },\\
&=0.\\
\end{split}
\end{equation}

 \begin{equation}
 \begin{split}
- \frac{e^{t^3-p}}{\sqrt{2}} +\partial_2\partial_3\tilde\Omega(t^2,t^3)&=\frac{ \frac{e^{2t^3-2p}}{2} -c_{333}-c_{233}\frac{\partial \lambda}{\partial t^2}}{ 2e^{2p}+\sqrt{2}t^2e^p-\frac{e^{t^3-p}}{\sqrt{2}} },\\
&=\frac{ \frac{e^{2t^3-2p}}{2} -t^2e^{t^3}-\sqrt{2}e^{2t^3} }{ 2e^{2p}+\sqrt{2}t^2e^p-\frac{e^{t^3-p}}{\sqrt{2}} },\\
&=\frac{ \frac{e^{t^3-p}}{\sqrt{2}} \left[  \frac{e^{t^3-p}}{\sqrt{2}} -\sqrt{2}t^2e^{p}-2e^{2p}\right] }{ 2e^{2p}+\sqrt{2}t^2e^p-\frac{e^{t^3-p}}{\sqrt{2}} },\\
&=- \frac{e^{t^3-p}}{\sqrt{2}}.\\
\end{split}
\end{equation}
Therefore, $\tilde\Omega(t^2,t^3)=\frac{(t^2)^2}{2}$.

\subsection{$H_{1,1}$  case }

To compute the integration constants of (\ref{Open H11}) is enough to find  $\tilde\Omega(t^2,t^3) $ such that 

 \begin{equation}\label{integration constants of h11}
\partial_{\alpha}\partial_{\beta}\left(t^1p+(t^2)^2\left[\frac{\partial \log\theta_1(p,t^3)}{\partial p}\right]+\tilde\Omega(t^2,t^3)   \right)= \frac{ \partial_{\alpha}\lambda \partial_{\beta}\lambda -c_{\alpha\beta}^{\gamma}\partial_{\gamma}\lambda}{ \partial_{p}\lambda},
\end{equation}
for $(\alpha,\beta)=(2,2),(2,3),(3,3)$. In order to compute $\tilde\Omega(t^2,t^3) $, it is necessary to Laurent expand both sides of (\ref{integration constants of h11}) and check the constant in the variable $p$.\\


The following relations will be useful

\begin{equation}\label{relation theta and Wp and Wz}
\begin{split}
\zeta(p,\tau)&=\frac{\partial \log \theta_1(p,\tau)}{\partial p}+4\pi ig_1(\tau)p,\\
\wp(p,\tau)&=-\frac{\partial^2 \log \theta_1(p,\tau)}{\partial^2 p}-4\pi ig_1(\tau),\\
\end{split}
\end{equation}
where $\zeta(p,\tau) ,\wp(p,\tau)$ are the Weierstrass zeta and Weierstrass p function respectively. i.e

\begin{equation*}
\begin{split}
\zeta(p,\tau)&=\frac{1}{p}+\sum_{m^2+n^2\neq 0}^{\infty}\frac{1}{p-m-n\tau}+\frac{1}{m+n\tau}+\frac{p}{(m+n\tau)^2},\\
\wp(p,\tau)&=\frac{1}{p^2}+\sum_{m^2+n^2\neq 0}^{\infty}\frac{1}{(p-m-n\tau)^2}-\frac{1}{(m+n\tau)^2}.
\end{split}
\end{equation*}
which have the following Laurent expansion around $p=0$
\begin{equation}\label{Laurent exp of Wp and Wz}
\begin{split}
\zeta(p,\tau)&=\frac{1}{p}+\sum_{n=2}^{\infty} \frac{c_n(\tau)}{2n-1}p^{2n-1}\\
\wp(p,\tau)&=\frac{1}{p}+\sum_{n=2}^{\infty} c_n(\tau)p^{2n}.
\end{split}
\end{equation}

For case $(2,2)$ in (\ref{integration constants of h11}), we have

 \begin{equation}\label{coeffiecient 22 of h11}
 \begin{split}
2\frac{\partial \log\theta_1(p,t^3)}{\partial p}+\partial_2\partial_2\tilde\Omega(t^2,t^3)&= \frac{ \partial_{2}\lambda \partial_{2}\lambda -c_{22}^{\gamma}\partial_{\gamma}\lambda}{ \partial_{p}\lambda},\\
&= \frac{ \partial_{2}\lambda \partial_{2}\lambda }{ \partial_{p}\lambda}+O(p).\\
\end{split}
\end{equation}

Substituting (\ref{Laurent exp of Wp and Wz}) and (\ref{relation theta and Wp and Wz}) in (\ref{coeffiecient 22 of h11}) and collecting the constant term in $p$ in the relation (\ref{coeffiecient 22 of h11}), we obtain
\begin{equation*}
\begin{split}
\partial_2\partial_2\tilde\Omega(t^2,t^3)=0.
\end{split}
\end{equation*}
Doing similar computation for the case $(2,3)$ and $(3,3)$, we obtain
\begin{equation*}
\begin{split}
\partial_2\partial_3\tilde\Omega(t^2,t^3)&=0,\\
\partial_3\partial_3\tilde\Omega(t^2,t^3)&=0,\\
\end{split}
\end{equation*}
which implies $\Omega(t^2,t^3)=0$ up linear terms.

%

\end{document}